\renewcommand{\emptyset}{\varnothing}
\newcommand{\rset}{\mathbb{R}}
\newcommand{\ind}{\mathbbm{1}}
\newcommand{\un}{\ind}
\newcommand{\indinf}{\boldsymbol{\iota}}
\newcommand{\dd}{\{1,\ldots,d\}}
\newcommand{\bzero}{\boldsymbol{0}}
\newcommand{\PP}{\operatorname{\mathbb{P}}}
\newcommand{\PE}{\operatorname{\mathbb{E}}}
\newcommand{\Var}{\operatorname{\mathbb{V}\mathrm{ar}}}
\newcommand{\Cov}{\operatorname{\mathbb{C}\mathrm{ov}}}
\newcommand{\mb}{\bm}
\newcommand{\ud}{\,\mathrm{d}}
\newcommand{\fidis}[1][{}]{\emph{fidis}}
\def\keywords{\vspace{.5em} {\textit{Keywords}:\, \relax%
  }}
\newcommand{\reals}{\rset}
\newcommand{\eps}{\varepsilon}
\newcommand{\wto}{\rightsquigarrow}
\newcommand{\oh}{\mathrm{o}}
\newcommand{\Oh}{\mathrm{O}}
\renewcommand{\le}{\leqslant}
\renewcommand{\ge}{\geqslant}
\newtheorem{condition}{Condition}
\newtheorem{cor}[theorem]{Corollary}
 \journalname{Extremes}
\begin{document}

\title{Identifying groups of variables with the potential of being large simultaneously
}


\author{Maël Chiapino \and Anne Sabourin
\and 
Johan Segers
}


\institute{Maël Chiapino \and Anne Sabourin \at
  LTCI, Télécom ParisTech, Université Paris-Saclay \\ 46, rue Barrault, 75013 Paris, France.\\
              \email{mael.chiapino@telecom-paristech.fr\,,\,anne.sabourin@telecom-paristech.fr}           
           \and
          Johan Segers \at 
          Universit\'{e} catholique de Louvain, Institut de Statistique, Biostatistique et Sciences Actuarielles, \\Voie du Roman Pays~20, B-1348 Louvain-la-Neuve, Belgium.\\
          \email{johan.segers@uclouvain.be}
}

\date{Received: date / Accepted: date}

\maketitle

\begin{abstract}
  Identifying groups of variables that may be large simultaneously amounts to finding out which joint tail dependence coefficients of a multivariate distribution are positive. The asymptotic distribution of a vector of nonparametric, rank-based estimators of these coefficients justifies a stopping criterion in an algorithm that searches the collection of all possible groups of variables in a systematic way, from smaller groups to larger ones. The issue that the tolerance level in the stopping criterion should depend on the size of the groups is circumvented by the use of a conditional tail dependence coefficient. Alternatively, such stopping criteria can be based on limit distributions of rank-based estimators of the coefficient of tail dependence, quantifying the speed of decay of joint survival functions. Numerical experiments indicate that the algorithm's effectiveness for detecting tail-dependent groups of variables is highest when paired with a criterion based on a Hill-type estimator of the coefficient of tail dependence.

  \keywords{multivariate extremes ;  asymptotic dependence ; statistical tests ; high dimensional data}
  \subclass{MSC 62G32 \and MSC 62H15 \and MSC 62H05 \and MSC 62H20}
\end{abstract}
%
%

\section{Introduction}
  
A question that often arises when monitoring several variables is which groups of variables are prone to be large simultaneously. In food risk management, for instance, the variables under consideration may be the concentrations of different contaminants in blood samples of consumers. In environmental applications, one may be interested in several physical variables such as wind speed and precipitation recorded at several locations, with the purpose of setting off a regional warning when several of these variables exceed a high threshold. In the context of semi-supervised anomaly detection, when the training sample is mostly made of normal instances, identifying the groups of variables which are likely to be large together allows to label certain new instances as abnormal.

The latter use case is the motivation behind the DAMEX algorithm \cite{goixsparse,goix2017sparse}. In a regular variation framework, identifying those groups among $d$ variables that may be large simultaneously amounts to identifying the support of the exponent measure. The algorithm returns the list of groups of features $\alpha\subset\{1,\ldots, d\}$ such that the mass of the empirical exponent measure on certain cones exceeds a user-defined threshold. However, when the empirical version of the exponent measure is scattered over a large number of such cones, the DAMEX algorithm does not discover a clear-cut structure. \citet{chiapinofeature} encounter this difficulty for extreme streamflow data recorded at several locations of the French river system.

To overcome this issue, the same authors come up with the CLEF (CLustering Extreme Features) algorithm. Instead of partitioning the sample space, CLEF considers nested regions corresponding to increasing subsets of components. A group of variables is enlarged until there is no longer enough evidence that all features in it may be large together. In this respect, CLEF resembles the Apriori algorithm \cite{agrawal1994fast}, which is a data-mining tool for discovering maximal sets of items among $d$ available items that are frequently  bought together by consumers. Apriori considers increasing itemsets that are made to grow until their frequency falls below a user-defined threshold. In CLEF, the stopping criterion concerns the relative frequency of simultaneous occurrences of large values of all components in a considered subset compared to the frequency of simultaneous occurrences of larges values of all but one component in this subset. \citet{chiapinofeature} find the method to work well on real and simulated data but do not investigate the asymptotic properties of the statistic underlying the stopping criterion.

Our contributions are three-fold. First, we investigate the asymptotic behavior of the statistic underlying CLEF. In this way, the informal stopping criterion can be turned into a proper hypothesis test with controllable level. A second issue concerns the specification of the null hypothesis in the CLEF stopping criterion. Originally, a certain conditional tail dependence coefficient, $\kappa_\alpha$, related to a given group of variables $\alpha \subset \{1, \ldots, d\}$ is supposed to be above a strictly positive, user-defined and therefore somewhat arbitrary threshold. We propose instead to base the stopping criterion on the hypothesis that a multivariate version of the coefficient of \citet{ledford1996statistics} and \citet{ramos2009new} is equal to one. The test is based on the limit distributions of multivariate extensions of nonparametric estimators in \citet{peng1999estimation} and \citet{draisma2001tail, draisma2004bivariate}. Third, we conduct a numerical experiment to compare the finite-sample performance of the DAMEX algorithm and the CLEF algorithm with the various stopping criteria. We find that overall, the multivariate extension of the Hill-type estimator in \cite{draisma2004bivariate} yields the most reliable procedure to detect maximal groups of asymptotically dependent variables.

Section~\ref{sec:taildep_background} casts the problem in the language of regular variation and introduces the tail dependence coefficients upon which the CLEF stopping criteria will be based. Necessary background on empirical tail dependence functions and processes is reviewed in Section~\ref{sec:etdf}, including a new result for the empirical joint tail function. In Section~\ref{sec:test-kappa}, we derive the asymptotic distribution of the statistic used in CLEF and turn the heuristic stopping criterion implemented in \cite{chiapinofeature} into a statistical test with asymptotically controllable level. Two alternative tests based on the asymptotic distributions of estimators of the Ledford--Tawn--Ramos coefficient of tail dependence are constructed in Sections~\ref{sec:mult-extens-peng} and~\ref{sec:hill}. We report the results of our simulation experiments in  Section~\ref{sec:simu-study}. Section~\ref{sec:conclusion} concludes. Proofs are gathered in Appendix~\ref{sec:appendix} while the pseudo-code for the CLEF algorithm and variations is provided in Appendix~\ref{sec:appendix-CLEF}.





\section{Regular variation and tail dependence coefficients}
\label{sec:taildep_background}

Bold letters denote vectors and binary operations between vectors are understood componentwise. 
The indicator function of a set $A$ is denoted by $\un_A$. For $t\in \rset\cup\{\infty\}$, we let $\mb t_\alpha$ denote the constant vector of $(\rset \cup \{\infty\})^\alpha$ with all coordinates equal to $t$. In the special case $\alpha = \{1, \ldots, d\}$, the index $\alpha$ is usually omitted for brevity when clear from the context: for instance, $\bzero = \bzero_{\{1,\ldots,d\}} = (0, \ldots, 0) \in \reals^d$.

Let ${\mb X} = (X_1, \ldots, X_d)$ be a random vector in $\reals^d$ with cumulative distribution function $F$, whose margins $F_1, \ldots, F_d$ are continuous. We assume that the transformed vector ${\mb V} = (V_1, \ldots, V_d)$ with $V_j = 1/\{1 - F_j(X_j)\}$ for all $j \in \{1,\ldots,d\}$ is regularly varying on the cone $[0,\infty]^d \setminus \{\bzero\}$ with (nonzero) limit or exponent measure $\mu$. This means that $\mu$ is finite on Borel sets of $[0,\infty]^d \setminus \{\bzero\}$ bounded away from the origin and that 
\begin{equation}
  \label{eq:reg-var}
  \lim_{t \to \infty} t \PP [ {\mb V} \in  t A] = \mu(A),
\end{equation}
for all Borel sets $A \subset [0,\infty]^d \setminus \{\bzero\}$ such that $\bzero \notin\partial A$ and $\mu(\partial A)=0$. The measure $\mu$ is homogeneous, i.e., $\mu(s \,\cdot\,) = s^{-1} \mu(\,\cdot\,)$ for all $0 < s < \infty$, and therefore assigns no mass to hyperplanes parallel to the coordinate axes. As a consequence, \eqref{eq:reg-var} applies to finite and infinite rectangles that are bounded away from the origin and whose sides are parallel to the coordinate axes. The measure $\mu$ characterizes the extremal dependence structure of ${\mb X}$. The reader is referred to \citet{resnick:2007, resnick2013extreme} for an introduction to regular variation.  

Let $\varnothing \ne \alpha \subset \dd$. Particular instances of \eqref{eq:reg-var} include the extremal coefficient $\lambda_\alpha$ \cite{schlather2003dependence} and the joint tail coefficient 
$\rho_\alpha$: 
\begin{align}
\label{eq:lambda_alpha}
  \lambda_\alpha
  &=
  \lim_{t \to \infty}  t \PP [ \exists j \in \alpha : V_j >  t ]
  =
  \mu ( \{ \mb{u} \in [0, \infty)^d \mid \exists j \in \alpha : u_j > 1 \} ),
  \\
  \label{eq:rho_alpha}
  \rho_\alpha
  &=
  \lim_{t \to \infty}  t \PP [ \forall j \in \alpha : V_j >  t ]
  =
  \mu ( \{ \mb{u} \in [0, \infty)^d \mid \forall j \in \alpha : u_j > 1 \} ).
\end{align}
In the bivariate case, $|\alpha|=2$, and with our choice of Pareto margins, we have $\rho_\alpha = \lim_{t\to \infty} \PP(V_{\alpha_1}>t \mid V_{\alpha_2}>t)$, the upper tail dependence coefficient denoted by $\chi$ in \cite{coles1999dependence}.

Our general objective is to propose statistically sound procedures to recover maximal subgroups $\alpha$ of components that are likely to be concomitantly large. Our aim can thus be phrased as recovering the maximal subsets $\alpha\subset\dd$ such that $\rho_\alpha>0$.

Since $\rho_\alpha \le \rho_\beta$ as soon as $\alpha \supset \beta$, any positive tolerance level with which we would like to compare an estimate of $\rho_\alpha$ should depend on $\alpha$ and in particular be decreasing as a function of the cardinality $|\alpha|$. To circumvent this issue, \citet{chiapinofeature} consider for $\alpha$ such that $|\alpha| \ge 2$ the conditional tail dependence coefficient
\begin{equation}
\label{eq:kappa}
  \kappa_\alpha =
  \lim_{t\to \infty} \PP\left[
  \forall j \in \alpha : V_j > t 
  \; \Big|\; 
  \textstyle\sum_{j\in\alpha} \ind{\{V_j > t\}}  \ge |\alpha| -1    
  \right],
\end{equation}
which is the limiting conditional probability that all variables in $\alpha$ exceed a large threshold given that all but at most one already do. In contrast to $\rho_\alpha$, the coefficient $\kappa_\alpha$ has no particular reason to decrease as a function of $|\alpha|$. Note that $\rho_\alpha = \mu(\Gamma_\alpha)$ while $\kappa_\alpha = \mu(\Gamma_\alpha) / \mu(\Delta_\alpha) = \rho_\alpha / \mu(\Delta_\alpha)$ where $\Gamma_\alpha = \{ \mb{x} \in [0, \infty)^d \mid \forall j \in \alpha : x_j > 1 \}$ and $\Delta_\alpha = \{\mb x \in [0, \infty)^d \mid \textstyle\sum_{j \in \alpha} \ind_{\{x_j\ge 1\}}\ge |\alpha|-1\}$, provided $\lvert \alpha \rvert \ge 2$. If $\mu(\Delta_\alpha) = 0$, then $\mu(\Gamma_\beta) = 0$ for all $\beta \subset \alpha$ with $|\beta| = |\alpha| - 1$; in that case, we define $\kappa_\alpha = 0$.

In the CLEF algorithm \citep{chiapinofeature}, the criterion to decide whether $\rho_\alpha>0$ or not is that $\widehat\kappa_\alpha \ge C$, where $C$ is a user-defined tolerance level, $\widehat\kappa_\alpha = \widehat\mu(\Gamma_\alpha)/\widehat\mu(\Delta_\alpha)$, and $\widehat\mu$ is the empirical exponent measure in \eqref{eq:mun} below. The level $C$ can be chosen independently of $\alpha$. Still, its choice is somewhat arbitrary, and in particular, the user has no control of false positives. In Section~\ref{sec:test-kappa}, we will provide the asymptotic distribution of $\widehat\kappa_\alpha$ and propose a test statistic with a guaranteed asymptotic level.

If $\rho_\alpha=0$ (or $\kappa_\alpha=0$), the limiting distributions of the statistics $\sqrt{k} (\widehat{\rho}_\alpha - \rho_\alpha)$ and $\sqrt{k} (\widehat{\kappa}_\alpha - \kappa_\alpha)$ are degenerate at zero.  We therefore have no control on the asymptotic levels of tests based on those statistics under $H_0 : \kappa_0 = 0$. This is why will have to define a CLEF stopping criterion in terms of a test of $H_0 : \kappa_\alpha \ge \kappa_{\min}$ versus $H_1 : \kappa_\alpha < \kappa_{\min}$ instead, in terms of a user-defined level $\kappa_{\min} > 0$. The choice of $\kappa_{\min}$ is somewhat arbitrary; in the simulation experiments (Section~\ref{sec:simu-study}), we choose $\kappa_{\min} = 0.08$. 

In Sections~\ref{sec:mult-extens-peng} and~\ref{sec:hill}, we consider alternative CLEF stopping criteria based on estimators of the coefficient of tail dependence $\eta_\alpha\in (0,1]$. For bivariate distributions, the coefficient has been introduced by \citet{ledford1996statistics} and extended by \citet{ramos2009new} in order to model situations in between asymptotic dependence ($\rho_{\{1,2\}} > 0$) and full independence of $X_1$ and $X_2$. \citet{HaanZhou11residual} and \citet{eastoe2012subasymptotic} proposed and studied a multivariate extension of $\eta_\alpha$ for $|\alpha| \ge 3$. The model assumption is that there exist $\eta_\alpha\in(0,1]$ and a slowly varying function $\mathcal{L}_\alpha$ such that
\begin{equation}
  \label{eq:taildep-multi}
  \PP[ \forall j \in \alpha : V_j > t ] = t^{-1/\eta_\alpha} \mathcal{L}_\alpha(t).  
\end{equation}

Suppose that the limit $\rho_\alpha$ in \eqref{eq:rho_alpha} exists and that \eqref{eq:taildep-multi} holds. Then $\rho_\alpha > 0$ implies $\eta_\alpha = 1$. The converse is true as well, provided $\liminf_{t\to\infty}\mathcal{L}_\alpha(t) > 0$. Modulo this side condition, which we will take for granted, the null hypothesis $\rho_\alpha>0$ corresponds to the simple hypothesis $\eta_\alpha = 1$. 

We will test the null hypothesis $\eta_\alpha = 1$ via multivariate extensions of nonparametric estimators of $\eta_\alpha$ in \citet{peng1999estimation} and \citet{draisma2004bivariate}. The null limit of the test statistic is non-degenerate, so that the asymptotic level of the test can be controlled, with no need to introduce an additional tolerance parameter $\kappa_{\min}$. The estimators that we will study are related to the Pickands estimator and the Hill estimator for the extreme value index of $T_\alpha = \min_{j\in\alpha} V_j$, respectively. The maximum likelihood estimator, also considered in~\cite{draisma2004bivariate}, is less suitable to our context due to its relative computational complexity, since the test is destined to be performed on a large number of subsets of $\{1,\ldots,d\}$. See also the review \cite{bacro2013measuring} and the references therein.

\begin{remark}
\label{rm:relationship-goix}
The DAMEX algorithm \citep{goix2017sparse} is designed to recover the family $\mathcal{M}$ of non-empty subsets $\alpha$ of $\dd$ with the property that
\[
  \mu \Bigl(
    \Bigl\{ \mb x \in [0, \infty)^d \;\Big|\;
    \|\mb x\|_\infty \ge 1 ;\; 
    \forall j \in \alpha\,, x_j > 0\; \text{ and } \forall j
    \notin\alpha, \,x_j = 0 \Bigr\}
  \Bigr)
  > 0.
\]
In contrast, our focus is on $\mathbb{M} = \{ \alpha \mid \rho_\alpha > 0\} = \{ \alpha \mid \kappa_\alpha > 0 \}$. Still,  the maximal elements of $\mathbb{M}$ for the inclusion order are also the maximal elements of $\mathcal{M}$ \citep[Lemma~1]{chiapinofeature}. The two problems of finding the maximal elements of $\mathbb{M}$ or $\mathcal{M}$ are thus equivalent.
\end{remark}

\section{Empirical tail dependence functions and processes}
\label{sec:etdf}

To find the asymptotic distribution of nonparametric estimators of the
various dependence coefficients, we rely on empirical tail
processes. Let the random vector $\mb{X} \sim F$ be as in
Section~\ref{sec:taildep_background}; in particular, assume regular
variation as in \eqref{eq:reg-var} with exponent measure $\mu$. Let
$\Lambda$ be the push-forward measure of $\mu$ on
$[0, \infty]^d \setminus \{ \mb{\infty} \}$ induced by the
transformation $\mb{x} \mapsto 1/\mb{x} = (1/x_1, \ldots, 1/x_d)$,
i.e.,
$\Lambda(\,\cdot\,) = \mu( \{ \mb{x} \in [0, \infty]^d \setminus \{
\bm{0} \} \mid 1/\mb{x} \in \, \cdot \, \} )$.

For $\varnothing \ne \alpha \subset \dd$, consider the stable tail dependence function $\ell_\alpha : [0, \infty)^\alpha \to [0, \infty)$ 
and the joint tail dependence function $r_\alpha : [0, \infty]^\alpha \setminus \{ \mb{\infty}_\alpha \} \to [0, \infty)$ given by
\begin{align}
\nonumber
  \ell_\alpha(\mb x) 
  &= \lim_{t \to 0} t^{-1} \PP [ \exists j \in \alpha : F_j(X_j) > 1 -  t x_j]
  = \Lambda( \{ \mb{y} \mid \exists j \in \alpha : y_j < x_j \} ),
  \\
\label{eq:r_alpha}
  r_\alpha(\mb x) 
  &= \lim_{t \to 0} t^{-1} \PP [ \forall j \in \alpha : F_j(X_j) > 1 - t x_j] 
  = \Lambda( \{ \mb{y} \mid \forall j \in \alpha : y_j < x_j \} ).
\end{align}
From \eqref{eq:lambda_alpha} and \eqref{eq:rho_alpha}, clearly $\lambda_\alpha = \ell_\alpha(\mb{1}_\alpha)$ and $\rho_\alpha = r_\alpha(\mb{1}_\alpha)$. For brevity, we write $\ell = \ell_{\dd}$ and $r = r_{\dd}$. Note that $\ell_\alpha( \bm{x} ) = \ell( \bm{x} \bm{e}_\alpha )$ for $\bm{x} \in [0, \infty)^\alpha$, where $\mb e_\alpha \in \{0, 1\}^d$ has components $\mb e_{\alpha, j} = \un_{\alpha}(j)$. Similarly, $r_\alpha( \bm{x} ) = r( \bm{x} \bm{\iota}_\alpha )$ for $\bm{x} \in [0, \infty]^\alpha \setminus \{ \bm{\infty}_\alpha \}$, where $\indinf_\alpha \in \{1,\infty\}^d$ denotes the vector such that $\indinf_{\alpha,j} = 1 $ if $j \in \alpha$ and $\indinf_{\alpha,j} = +\infty $ otherwise. By the inclusion--exclusion formula, for $\bm{x} \in [0, \infty)^\alpha$, 
writing $\bm{x}_\beta = (x_j)_{j \in \beta}$, we have
\begin{align}
  \label{eq:ell_beta2r_alpha}
  r_\alpha(\mb x) 
  &= 
  \sum_{\varnothing \ne \beta \subset \alpha}
  (-1)^{|\beta|+1} \ell_\beta (\mb x_\beta), &
  \ell_\alpha(\mb x) 
  &= 
  \sum_{\emptyset \neq \beta \subset \alpha}
  (-1)^{|\beta|+1} r_\beta (\mb x_\beta).
\end{align}
Let ${\mb X}_i = (X_{i,1}, \ldots, X_{i,d})$, for $i \in \{1,\ldots,n\}$, be an independent random sample from $F$, having continuous margins and satisfying \eqref{eq:reg-var}. Let $k = k(n)\to \infty$ as $n\to\infty$, while $k(n) = \oh(n)$. Following for instance \cite{einmahl2012m, goix2017sparse, qi1997almost}, we rely on ranks to obtain an approximately Pareto-distributed sample $\widehat {\mb V}_i = (\widehat{V}_{i,1}, \ldots, \widehat{V}_{i,d})$. Let $\widehat F_j(x) = n^{-1} \sum_{i=1}^n \ind_{\{X_{i,j} < x\}}$ be the (left-continuous) empirical distribution function of component $j \in \{1,\ldots,d\}$ and put $\widehat V_{i,j} = 1/\{1 - \widehat F_j(X_{i,j})\} = n/(n + 1 - R_{i,j})$, where $R_{i,j}$ 
is the rank of $X_{i,j}$ among $X_{1,j}, \ldots, X_{n,j}$. The empirical counterparts to $\mu$ and $\Lambda$ are
\begin{align}
  \label{eq:mun}
    \widehat{\mu}(\,\cdot\,) 
    &= 
    \frac{1}{k}\sum_{i=1}^n \delta_{ (k/n)\widehat{\mb V}_{i} }(\,\cdot\,), &
    \widehat{\Lambda}(\,\cdot\,)
    &=
    \frac{1}{k}\sum_{i=1}^n \delta_{ (n/k)/\widehat{\mb V}_{i} }(\,\cdot\,),
\end{align}
respectively, with $\delta_a$ the Dirac measure at the point $a$. Replacing $\Lambda$ by $\widehat{\Lambda}$ in the definition of $\ell_\alpha$ and $r_\alpha$ produces the empirical tail dependence function
\begin{align*}
  \widehat{\ell}_\alpha( \bm{x} )
  &= k^{-1} \textstyle\sum_{i=1}^n \un\{\exists j \in \alpha : n + 1 - R_{i,j} \le \lfloor k x_j \rfloor \} \\
  &= k^{-1} \textstyle\sum_{i=1}^n \un\{\exists j \in \alpha : X_{i,j} \ge X_{(n-\lfloor k x_j \rfloor +1),j} \} 
\end{align*}
and the empirical joint tail function
\begin{align}
\label{eq:r_alpha:estim}
  \widehat{r}_\alpha( \bm{x} )
  &= k^{-1} \textstyle\sum_{i=1}^n \un\{\forall j \in \alpha : n + 1 - R_{i,j} \le \lfloor k x_j \rfloor \} \\
\nonumber  
  &= k^{-1} \textstyle\sum_{i=1}^n \un\{\forall j \in \alpha : X_{i,j} \ge X_{(n-\lfloor k x_j \rfloor +1),j} \},    
\end{align}
where $X_{(1),j} \le \ldots \le X_{(n),j}$ are the ascending order statistics of $X_{1,j}, \ldots, X_{n,j}$ and $\lfloor \,\cdot\, \rfloor$ is the floor function. The identities \eqref{eq:ell_beta2r_alpha} hold for $\widehat{\ell}_\alpha$ and $\widehat{r}_\alpha$ as well.

\citet[Theorem~4.6]{einmahl2012m} find the weak limit of the empirical process $\sqrt{k} ( \widehat{\ell} - \ell )$ on $[0, T]^d$ for any $T > 0$. We leverage their theorem to show a similar result for $\sqrt{k}( \widehat{r}_\alpha - r_\alpha )$, jointly in $\alpha$. The following conditions stem from the cited article.
\begin{condition}[Uniform tail convergence]
\label{as:bias}
There exists $\gamma > 0$ such that, uniformly in $\mb x \in [0,1]^d$ with $\sum_{j=1}^d x_j = 1$, we have 
\[
  t^{-1}\PP[ \exists j = 1, \ldots, d : \, F_j(X_j) > tx_j ] - \ell(\mb x) 
  = \Oh(t^{\gamma}), \qquad t \to \infty.
\]
\end{condition}

\begin{condition}[Moderate $k$]
\label{as:small-k}
The sequence $k = k(n)$ satisfies $k = \oh(n^{2\gamma/(1+2\gamma)})$ as $n \to \infty$, with $\gamma > 0$ as in Condition~\ref{as:bias}.
\end{condition}

\begin{condition}[Smoothness]
\label{as:partialDeriv}
For all $j \in\{1,\ldots,d\}$, the partial derivative $\partial_j \ell = \partial \ell / \partial x_j$ exists and is continuous on the set $\{ \mb x \in [0,\infty)^d \mid x_j > 0\}$.
\end{condition}

Since $\ell$ is convex, it is continuously differentiable Lebesgue almost everywhere \citep[Theorem~25.5]{rockafellar:1970}. Condition~\ref{as:partialDeriv} is satisfied for many popular max-stable models (logistic, asymmetric logistic, Brown--Resnick) but fails for max-linear models. Under Condition~\ref{as:partialDeriv}, the partial derivative $\partial_j r_\alpha = \partial r_\alpha / \partial x_j$ ($j \in \alpha$) exists and is continuous on $\{ \mb{x} \in [0, \infty)^\alpha \mid x_j > 0 \}$ and satisfies $\partial_j r_\alpha( \mb{x} ) = \sum_{\beta : j \in \beta \subset \alpha} (-1)^{|\beta|+1} \partial_j \ell_\beta(\mb{x}_\beta)$, where $\mb{x}_\beta = (x_s)_{s \in \beta}$.

\citet{einmahl1997poisson} and \citet{einmahl2012m} consider a centered Gaussian process $W$ indexed by the Borel sets of $[0,\infty]^d\setminus\{\mb{\infty} \}$ bounded away from $\mb{\infty}$ with covariance function 
\begin{equation}
\label{eq:covW}
  \PE[W(A) \, W(B)] = \Lambda(A \cap B ).
\end{equation}
Note that $W(\emptyset) = 0$ almost surely. For $\emptyset\neq \alpha\subset\{1,\ldots, d\}$ and $\bm{x} \in [0, \infty)^\alpha$, write
\[ 
  W_\alpha( \bm{x} ) = W(\{ \mb{y} \in [0, \infty]^d \mid \forall j \in \alpha : y_j < x_j \}). 
\]
We consider weak convergence as in \cite{van2000asymptotic, van1996weak}; notation $\wto$. We work in the metric space $\ell^\infty(S)$ of bounded, real functions $f$ on an arbitrary set $S$, the metric being the one induced by the supremum norm, $\| f \|_\infty = \sup_{x \in S} \lvert f(x) \rvert$; the double use of the symbol $\ell$ should not give rise to any confusion. The proof of the following proposition and of other results in the paper is deferred to Appendix~\ref{sec:appendix}.

\begin{proposition}
\label{prop:rnx}
Let ${\mb X}_i = (X_{i,1}, \ldots, X_{i,d})$, for $i \in \{1,\ldots,n\}$, be an independent random sample from $F$, having continuous margins and satisfying \eqref{eq:reg-var}. Let $k = k(n)\to \infty$ as $n\to\infty$, while $k(n) = \oh(n)$. If Conditions~\ref{as:bias},~\ref{as:small-k} and~\ref{as:partialDeriv} hold, then, for $T > 0$, in the product space $\prod_{\varnothing \ne \alpha \subset \dd} \ell^\infty( [0, T]^\alpha )$, we have, as $n \to \infty$, the weak convergence
\begin{equation}
\label{eq:Z_alpha(x)}
  \sqrt k\left\{\widehat r_\alpha(\mb x) - r_\alpha(\mb x)\right\}
  \wto
  W_\alpha(\mb x) - \sum_{j\in\alpha}
  \partial_j r_{\alpha}(\mb x) \, W_{\{j\}}(x_j)
  =
  Z_\alpha( \mb{x} ).
\end{equation}
\end{proposition}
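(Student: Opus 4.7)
The plan is to reduce the claim to Theorem~4.6 of \cite{einmahl2012m}, which under Conditions~\ref{as:bias}, \ref{as:small-k} and \ref{as:partialDeriv} asserts the weak convergence
\[
  \sqrt{k}\bigl(\widehat{\ell}(\mb{x}) - \ell(\mb{x})\bigr) \wto U(\mb{x}) - \sum_{j=1}^d \partial_j \ell(\mb{x}) \, W_{\{j\}}(x_j) \quad \text{in } \ell^\infty([0,T]^d),
\]
with $W$ the Gaussian process of \eqref{eq:covW} and $U(\mb{x}) = W(\{\mb{y} \mid \exists j \in \dd : y_j < x_j\})$. First I will extract a joint weak limit for the marginal stable tail functions: since $\widehat{\ell}_\beta(\mb{x}_\beta) = \widehat{\ell}(\mb{x}_\beta \mb{e}_\beta)$, the restriction map $f \mapsto \bigl(\mb{x}_\beta \mapsto f(\mb{x}_\beta \mb{e}_\beta)\bigr)_\beta$ is a bounded linear operator from $\ell^\infty([0,T]^d)$ into $\prod_{\varnothing \ne \beta \subset \dd} \ell^\infty([0,T]^\beta)$, hence continuous for the product topology. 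Applying the continuous mapping theorem, and using $W_{\{j\}}(0) = W(\varnothing) = 0$ to discard the components with $j \notin \beta$, yields the joint convergence of $\sqrt{k}(\widehat{\ell}_\beta - \ell_\beta)$ to $U_\beta - \sum_{j \in \beta} \partial_j \ell_\beta \cdot W_{\{j\}}$, where $U_\beta(\mb{x}_\beta) = W(\{\mb{y} \mid \exists j \in \beta : y_j < x_j\})$.

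Next I will transfer this to $\widehat{r}_\alpha$ via inclusion-exclusion. The identity \eqref{eq:ell_beta2r_alpha} expresses $\widehat{r}_\alpha - r_\alpha$ as a fixed linear combination of the processes $\widehat{\ell}_\beta - \ell_\beta$, so a second application of the continuous mapping theorem (to the linear map between the corresponding product $\ell^\infty$-spaces) gives the joint weak convergence of $\sqrt{k}(\widehat{r}_\alpha - r_\alpha)$ to
\[
  \sum_{\varnothing \ne \beta \subset \alpha} (-1)^{|\beta|+1} \left\{ U_\beta(\mb{x}_\beta) - \sum_{j \in \beta} \partial_j \ell_\beta(\mb{x}_\beta) \, W_{\{j\}}(x_j) \right\}.
\]

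It then remains to recognise this expression as $Z_\alpha(\mb{x})$. For the linear terms in $W_{\{j\}}$, interchanging the order of summation and invoking the relation $\partial_j r_\alpha(\mb{x}) = \sum_{\beta \colon j \in \beta \subset \alpha} (-1)^{|\beta|+1} \partial_j \ell_\beta(\mb{x}_\beta)$ noted just after Condition~\ref{as:partialDeriv} produces $\sum_{j \in \alpha} \partial_j r_\alpha(\mb{x}) \, W_{\{j\}}(x_j)$. For the $U_\beta$-terms, applying inclusion-exclusion at the level of the finitely additive random set function $W$ yields $U_\beta(\mb{x}_\beta) = \sum_{\varnothing \ne \gamma \subset \beta} (-1)^{|\gamma|+1} W_\gamma(\mb{x}_\gamma)$; substituting, swapping sums and using the identity $\sum_{\beta' \subset S} (-1)^{|\beta'|} = \un\{S = \varnothing\}$ collapses the double sum to $W_\alpha(\mb{x})$.

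Overall the argument is mostly mechanical once \cite[Theorem~4.6]{einmahl2012m} is in hand, the only delicate step being the combinatorial bookkeeping in the last stage. Joint convergence over all $\alpha$ in the product space $\prod_\alpha \ell^\infty([0,T]^\alpha)$ poses no extra difficulty because the entire chain of deductions consists of continuous linear maps applied to a single weakly convergent process $\sqrt{k}(\widehat{\ell} - \ell)$.
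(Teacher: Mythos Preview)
Your proposal is correct and follows essentially the same route as the paper: invoke \cite[Theorem~4.6]{einmahl2012m}, pass to $\widehat{r}_\alpha$ via inclusion--exclusion and the continuous mapping theorem, and then simplify the limit using the finite additivity of $W$ together with $W_{\{j\}}(0)=0$. The only point the paper spells out that you leave implicit is the version argument ensuring the inclusion--exclusion identity for $W$ holds almost surely simultaneously for all $\alpha$ and all $\mb x$ (first on rational $\mb x$, then extend by continuity), which is needed for the equality of processes rather than just of finite-dimensional distributions.
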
  
%
%

\section{Estimating the conditional tail dependence coefficient}
\label{sec:test-kappa}

This section investigates the asymptotic distribution of the empirical conditional dependence coefficient $\widehat{\kappa}_\alpha$ based on the empirical exponent measure $\widehat{\mu}$. This is achieved by re-writing  $\widehat\kappa_\alpha$  as a function of the empirical joint tail coefficients  $\widehat\rho_\alpha$, the distribution of which follows from Proposition~\ref{prop:rnx}. We also propose consistent estimators of the asymptotic variance of $\widehat{\kappa}_\alpha$. Combining the two yields a test for the null hypothesis $\kappa_\alpha \ge \kappa_{\min}$ where $\kappa_{\min}\in(0,1)$ is a tolerance level fixed by the user, to be seen as the minimal limiting conditional probability that all components in a random vector exceed a threshold, given that all of them but at most one already do.  

Let $\varnothing \ne \alpha \subset \dd$ and recall the sets $\Gamma_\alpha = \{ \mb{x} \in [0, \infty)^d \mid \forall j \in \alpha : x_j > 1 \}$ and, provided $\alpha$ has at least two elements, $\Delta_\alpha = \{\mb x \in [0, \infty)^d \mid \textstyle\sum_{j \in \alpha} \ind_{\{x_j\ge 1\}}\ge |\alpha|-1\}$. Write $\alpha \setminus j = \alpha \setminus \{j\}$ for $j \in \alpha$. Since $\Delta_\alpha$ is the disjoint union of the sets $\Gamma_{\alpha \setminus j} \setminus \Gamma_\alpha$ and $\Gamma_\alpha$, where $j \in \alpha$, we find, for every Borel measure $\nu$, the equality
\begin{equation}
  \label{eq:B-Gamma}
  \nu( \Delta_\alpha ) 
  = \sum_{j \in \alpha} \nu(\Gamma_{\alpha \setminus j}) - (|\alpha| -1) \, \nu(\Gamma_\alpha).
\end{equation}
Recall $\rho_\alpha = \mu(\Gamma_\alpha)$ and $\kappa_\alpha = \mu(\Gamma_\alpha) / \mu(\Delta_\alpha)$ in \eqref{eq:kappa}. By \eqref{eq:B-Gamma} applied to $\nu = \mu$, we have
\begin{equation}
  \label{eq:rewriteKappa}
  \kappa_\alpha 
  = 
  \frac%
    {\rho_\alpha}%
    {\sum_{j \in \alpha} \rho_{\alpha \setminus j} - (|\alpha| - 1)\rho_{\alpha}}.
\end{equation}

Recall the joint tail function $r_\alpha$ and its nonparametric estimator $\widehat{r}_\alpha$ in \eqref{eq:r_alpha} and \eqref{eq:r_alpha:estim}, respectively. Since $\rho_\alpha = r_\alpha( \mb{1}_\alpha )$, we define the estimators $\widehat{\rho}_\alpha = \widehat{\mu}( \Gamma_\alpha ) = \widehat{r}_\alpha( \bm{1}_\alpha )$ and, provided $\lvert \alpha \rvert \ge 2$,
\[
  \widehat{\kappa}_\alpha
  =
  \frac{\widehat{\mu}( \Gamma_\alpha )}{\widehat{\mu}( \Delta_\alpha )}
  =
  \frac%
    {\widehat{\rho}_\alpha}%
    {\sum_{j \in \alpha} \widehat{\rho}_{\alpha \setminus j} - (|\alpha| - 1) \widehat{\rho}_{\alpha}}.  
\]
The asymptotic distribution of the vector of empirical joint tail coefficients follows immediately from Proposition~\ref{prop:rnx}. Write $\dot{\rho}_{\alpha, j} = \partial_j r_{\alpha}(\bm{1}_\alpha)$. 

\begin{cor}
\label{prop:asymptotic-hatRho} 
In the setting of Proposition~\ref{prop:rnx}, we have, jointly in $\varnothing \ne \alpha \subset \dd$, the weak convergence
\begin{equation}
\label{eq:G_alpha}
  \sqrt{k_n} \left( \widehat{\rho}_{\alpha} - \rho_\alpha \right)
  \wto Z_\alpha( \mb{1}_\alpha ) = G_\alpha,
  \qquad n \to \infty.
\end{equation}
The limit distribution is centered Gaussian with covariance matrix
\begin{equation}
\label{eq:covG}
  \PE[ G_{\alpha} G_{\alpha'} ]
  = \rho_{\alpha\cup\alpha'} -
  \sum_{j \in \alpha} \dot{\rho}_{j,\alpha}
  \rho_{\alpha' \cup \{j\}} - \sum_{j' \in \alpha'}
  \dot{\rho}_{j',\alpha'} \rho_{\alpha \cup \{j'\}}
  +\sum_{j \in \alpha} \sum_{j' \in \alpha'}
  \dot{\rho}_{j,\alpha} \,  \dot{\rho}_{j',\alpha'} \, \rho_{\{j, j'\}}.
\end{equation}
\end{cor}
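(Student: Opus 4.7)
The corollary is essentially a specialization of Proposition~\ref{prop:rnx}, so my plan is to first deduce the weak convergence from it and then simply read off the covariance structure.

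First, I would apply Proposition~\ref{prop:rnx} with any $T \ge 1$. The point-evaluation map $f \mapsto f(\mb{1}_\alpha)$ is continuous on $\ell^\infty([0,T]^\alpha)$, so by the continuous mapping theorem, the joint weak convergence of $\sqrt{k}\{\widehat{r}_\alpha - r_\alpha\}$ in the product space $\prod_{\varnothing \ne \alpha \subset \{1,\ldots,d\}} \ell^\infty([0,T]^\alpha)$ yields joint weak convergence of $\sqrt{k_n}(\widehat{\rho}_\alpha - \rho_\alpha) = \sqrt{k_n}\{\widehat{r}_\alpha(\mb{1}_\alpha) - r_\alpha(\mb{1}_\alpha)\}$ to $Z_\alpha(\mb{1}_\alpha) = G_\alpha$, jointly in $\alpha$. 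Linearity then gives that the limit is centered Gaussian, with
\[
  G_\alpha
  = W_\alpha(\mb{1}_\alpha) - \sum_{j\in\alpha} \dot{\rho}_{j,\alpha}\, W_{\{j\}}(1).
\]

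Next, I would compute the covariances using the covariance kernel \eqref{eq:covW} of $W$. The key observation is that the index sets are nested: for any $\varnothing \ne \alpha, \alpha' \subset \dd$,
\[
  \{\mb y \mid \forall j \in \alpha : y_j < 1\} \cap \{\mb y \mid \forall j \in \alpha' : y_j < 1\}
  = \{\mb y \mid \forall j \in \alpha \cup \alpha' : y_j < 1\},
\]
whose $\Lambda$-measure is $r_{\alpha \cup \alpha'}(\mb{1}_{\alpha \cup \alpha'}) = \rho_{\alpha \cup \alpha'}$. The same remark applied to singletons yields $\PE[W_\alpha(\mb{1}_\alpha)\, W_{\{j'\}}(1)] = \rho_{\alpha \cup \{j'\}}$ and $\PE[W_{\{j\}}(1)\, W_{\{j'\}}(1)] = \rho_{\{j,j'\}}$. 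Expanding $\PE[G_\alpha G_{\alpha'}]$ as a sum of four terms using the expression above for $G_\alpha$ and $G_{\alpha'}$ and substituting these three identities produces exactly formula \eqref{eq:covG}.

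There is no real obstacle here beyond careful bookkeeping: the covariance calculation reduces to recognizing that intersections of the relevant ``lower orthants'' indexed by $\alpha$ and $\alpha'$ are the orthant indexed by $\alpha \cup \alpha'$, after which the four resulting cross-covariances combine into the claimed four-term expression. All ingredients, namely the weak convergence, the covariance kernel of $W$, and the relation $\rho_\beta = r_\beta(\mb 1_\beta)$, are already in place.
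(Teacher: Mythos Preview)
Your proposal is correct and follows essentially the same route as the paper: specialize Proposition~\ref{prop:rnx} at $\mb x=\mb 1_\alpha$ (you make the continuous mapping step explicit, the paper leaves it implicit), then compute $\PE[W_\alpha(\mb 1_\alpha)W_{\alpha'}(\mb 1_{\alpha'})]=\rho_{\alpha\cup\alpha'}$ via the intersection-of-orthants observation and expand $G_\alpha G_{\alpha'}$ bilinearly to obtain \eqref{eq:covG}.
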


The asymptotic distribution of $\widehat{\kappa}_\alpha$ follows from the one of $(\widehat{\rho}_\beta)_\beta$ via the delta method. The asymptotic variance involves the partial derivative $\partial_j \kappa_\alpha = \partial \kappa_\alpha / \partial x_j$ of the function
\begin{equation}
  \label{eq:kappax}
  \kappa_\alpha(\mb x) 
   = \frac{r_\alpha(\mb x)}
   { \sum_{j \in\alpha} r_{\alpha \setminus j}(\mb x_{\alpha \setminus j}) - (\lvert\alpha\rvert -1) r_\alpha(\mb x)}
\end{equation}
for $\mb x \in [0,\infty)^\alpha$. Note that $\kappa_\alpha(\mb 1_\alpha) = \kappa_\alpha$. Write $\dot{\kappa}_{j,\alpha} = \partial_j \kappa_\alpha( \mb{1}_\alpha )$.
  
\begin{proposition}
\label{theo:asymptot-kappa}
In the setting of Corollary~\ref{prop:asymptotic-hatRho}, we have, as $n \to \infty$ and jointly in $\alpha\subset\{1,\ldots,d\}$ such that $|\alpha| \ge 2$ and $\mu(\Delta_\alpha) > 0$, the weak convergence
  \begin{equation}
  \label{eq:kappalimit}
    \sqrt{k}\left( \widehat\kappa_\alpha - \kappa_\alpha\right)
    \wto
    \mu(\Delta_\alpha)^{-2}
    \left\{
      \left( \textstyle\sum_{j\in\alpha}\rho_{\alpha\setminus j} \right) G_\alpha
      - \rho_\alpha \textstyle\sum_{j\in\alpha} G_{\alpha\setminus j}
    \right\}.
  \end{equation}
For a fixed such $\alpha$, the limit distribution is $\mathcal{N}(0, \sigma_{\kappa,\alpha}^2)$ with
  \begin{multline}
    \label{eq:sigmakappa2}
    \sigma^2_{\kappa, \alpha} = \big(1 - \kappa_\alpha)\kappa_\alpha
    \left\{ \mu(\Delta_\alpha)^{-1} - \textstyle\sum_{j\in\alpha}\dot{\kappa}_{j,\alpha} \right\}
    + \sum_{i\in\alpha}\sum_{j\in\alpha} \dot{\kappa}_{i,\alpha}\dot{\kappa}_{j,\alpha} \rho_{\{i,j\}} \\
    + \kappa_{\alpha} \sum_{ j \in\alpha}\dot{\kappa}_{j,\alpha}
    \left\{ 1 - \mu(\Delta_\alpha)^{-1} \rho_{\alpha\setminus j} \right\}.
  \end{multline}
\end{proposition}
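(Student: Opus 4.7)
My plan is to apply the multivariate delta method to the identity~\eqref{eq:rewriteKappa}, which writes $\kappa_\alpha$ as a smooth function of $\rho_\alpha$ together with the $\rho_{\alpha\setminus j}$, $j\in\alpha$. Concretely, $\widehat\kappa_\alpha = g\bigl(\widehat\rho_\alpha,(\widehat\rho_{\alpha\setminus j})_{j\in\alpha}\bigr)$ with
\[
  g\bigl(y_0,(y_j)_{j\in\alpha}\bigr) = \frac{y_0}{\sum_{j\in\alpha} y_j - (|\alpha|-1)\,y_0}.
\]
By~\eqref{eq:B-Gamma} with $\nu=\mu$, the denominator evaluated at the true point equals $\mu(\Delta_\alpha)$, which is strictly positive by the hypothesis of the proposition, so $g$ is continuously differentiable on a neighbourhood of $\bigl(\rho_\alpha,(\rho_{\alpha\setminus j})_j\bigr)$.

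The weak limit~\eqref{eq:kappalimit} follows by combining this smoothness with the joint convergence supplied by Corollary~\ref{prop:asymptotic-hatRho}, namely $\sqrt{k}\bigl(\widehat\rho_\alpha-\rho_\alpha,(\widehat\rho_{\alpha\setminus j}-\rho_{\alpha\setminus j})_{j\in\alpha}\bigr) \wto (G_\alpha,(G_{\alpha\setminus j})_{j\in\alpha})$. A direct quotient-rule calculation gives $\partial_{y_0}g = \mu(\Delta_\alpha)^{-2}\sum_{j\in\alpha}\rho_{\alpha\setminus j}$ and $\partial_{y_j}g = -\mu(\Delta_\alpha)^{-2}\rho_\alpha$ for $j\in\alpha$, so the delta method reproduces exactly the linear combination on the right-hand side of~\eqref{eq:kappalimit}; the limit is centered Gaussian since each $G_\beta$ is.

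The variance formula~\eqref{eq:sigmakappa2} is then obtained by expanding the variance of that linear combination by bilinearity and substituting~\eqref{eq:covG} for each pair among $(G_\alpha,G_\alpha)$, $(G_\alpha,G_{\alpha\setminus j})$ and $(G_{\alpha\setminus i},G_{\alpha\setminus j})$. The raw expansion is phrased in terms of the partial derivatives $\dot\rho_{j,\beta}$, whereas the target expression uses the $\dot\kappa_{j,\alpha}$; the link between the two is the chain rule applied to~\eqref{eq:kappax}, which at $\mb x = \mb 1_\alpha$ yields
\[
  \dot\kappa_{j,\alpha} = \mu(\Delta_\alpha)^{-2}\Bigl[\dot\rho_{j,\alpha}\,\mu(\Delta_\alpha) - \rho_\alpha\Bigl(\textstyle\sum_{i\in\alpha\setminus j}\dot\rho_{j,\alpha\setminus i} - (|\alpha|-1)\dot\rho_{j,\alpha}\Bigr)\Bigr].
\]
Substituting this relation and using the Pareto-margin identity $\rho_{\{i\}}=1$, the diagonal contribution produces the Bernoulli-type factor $(1-\kappa_\alpha)\kappa_\alpha\mu(\Delta_\alpha)^{-1}$ and the cross-terms collapse into the remaining two blocks of~\eqref{eq:sigmakappa2}.

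The main obstacle is precisely this algebraic consolidation: each $\Cov(G_{\alpha\setminus i},G_{\alpha\setminus j})$ produces four sums ranging over subsets of $\alpha$, so the bookkeeping is heavy. The key tools to carry it out are the set identities $(\alpha\setminus i)\cup(\alpha\setminus j) = \alpha$ for $i\ne j$ and $(\alpha\setminus j)\cup\{j\} = \alpha$, together with the inclusion--exclusion relation~\eqref{eq:B-Gamma} applied to $\mu$, yielding the collapse $\sum_{j\in\alpha}\rho_{\alpha\setminus j}-(|\alpha|-1)\rho_\alpha = \mu(\Delta_\alpha)$ that lets one recognise factors of $\mu(\Delta_\alpha)$ whenever they appear implicitly and match~\eqref{eq:sigmakappa2} term by term.
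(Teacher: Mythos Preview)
Your derivation of the weak limit~\eqref{eq:kappalimit} via the delta method applied to $g\bigl(y_0,(y_j)_{j\in\alpha}\bigr)=y_0/\{\sum_j y_j-(|\alpha|-1)y_0\}$ is exactly the paper's argument, down to the computation of the partial derivatives and the use of Corollary~\ref{prop:asymptotic-hatRho} for the joint convergence.

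Where you diverge is in the variance calculation. You propose to expand $\Var$ of the limit bilinearly and substitute the closed-form covariances~\eqref{eq:covG} for every pair $(G_\alpha,G_\alpha)$, $(G_\alpha,G_{\alpha\setminus j})$, $(G_{\alpha\setminus i},G_{\alpha\setminus j})$, then regroup using the chain-rule identity for $\dot\kappa_{j,\alpha}$ (which you state correctly; it is equivalent to the paper's $\dot\kappa_{j,\alpha}=-K_{\alpha,j}/\mu(\Delta_\alpha)^2$). This is legitimate and will reach~\eqref{eq:sigmakappa2}, but, as you yourself note, it generates a large number of double sums over $\alpha$ that must be collapsed by hand. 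The paper avoids most of that bookkeeping by exploiting a structural property of the set-indexed process $W$ that you do not use: since $W(A\cup B)=W(A)+W(B)$ almost surely for disjoint $A,B$, one can write $W_{\alpha\setminus i}^\cap=W_\alpha^\cap+W(R_{\alpha\setminus i}\setminus R_\alpha)$, so that the limit $H_\alpha=\bigl(\sum_j\rho_{\alpha\setminus j}\bigr)G_\alpha-\rho_\alpha\sum_j G_{\alpha\setminus j}$ decomposes into a linear combination of $W_\alpha^\cap$, the $S$ variables $W(R_{\alpha\setminus j}\setminus R_\alpha)$, and the $W_j$. The first $S+1$ of these are mutually uncorrelated (disjoint sets), and the cross-covariances with the $W_j$ are immediate, so $\Var(H_\alpha)$ falls out with almost no cancellation to track. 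Your route buys nothing extra and costs the heavy algebra you anticipate; the paper's disjoint-set decomposition is the trick that makes the computation short.
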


Following \cite{peng1999estimation}, the asymptotic variance $\sigma^2_{\kappa,\alpha}$ in \eqref{eq:sigmakappa2} can be estimated consistently by estimating the partial derivatives $\dot{\kappa}_{i,\alpha}$ via finite differencing applied to the empirical version of $\kappa_{\alpha}(\mb{x})$ in \eqref{eq:kappax} obtained by replacing $r_\alpha$ and $r_{\alpha \setminus j}$ by $\widehat{r}_\alpha$ and $\widehat{r}_{\alpha \setminus j}$, respectively:
\[
  \widehat \kappa_{\alpha}(\mb x) 
  = 
  \frac%
    {\sum_{i=1}^n \un\{\forall j \in\alpha : X_{i,j } \ge X_{(n - \lfloor kx_j \rfloor +1 ), j} \} }%
    {\sum_{i=1}^n \un\{\exists m\in\alpha: \forall j \in\alpha \setminus m : X_{i,j } \ge  X_{(n - \lfloor kx_j \rfloor +1 ), j} \}} 
\]
Define
\begin{equation}
\label{eq:dotkappan}
  \dot{\kappa}_{j, \alpha, n} = \frac{1}{2 k^{-1/4}} 
  \left\{ \widehat \kappa_{\alpha}(\mb
  1_\alpha + k^{-1/4}\mb e_j) - \widehat \kappa_{\alpha}(\mb 1_\alpha - k^{-1/4}\mb e_j ) \right\},
\end{equation}
with $\mb e_j$ the canonical unit vector of $\reals^\alpha$ pointing in direction $j \in \alpha$, and put
\begin{multline}
\label{eq:Hatvar-kappa}
  \widehat{\sigma}^2_{\kappa, \alpha} = \big(1 -
  \widehat\kappa_\alpha)\widehat\kappa_\alpha 
  \left\{ \widehat\mu(\Delta_\alpha)^{-1} - \textstyle\sum_{j\in\alpha}\dot{\kappa}_{j,\alpha, n} \right\}
  + 
  \sum_{i,j\in\alpha}\dot{\kappa}_{i,\alpha, n}\dot{\kappa}_{j,\alpha,n} \widehat\rho_{\{i,j\}}  \\
  + 
  \widehat\kappa_{\alpha} \sum_{ j \in\alpha} \dot{\kappa}_{j,\alpha,n} 
  \left\{ 1 - \widehat\mu(\Delta_\alpha)^{-1} \widehat\rho_{\alpha \setminus j} \right\}.
\end{multline}

\begin{proposition}\label{prop:estim-sigma-kappa}
Under the conditions of Proposition~\ref{theo:asymptot-kappa}, we have
  $\widehat{\sigma}^2_{\kappa,\alpha} = \sigma^2_{\kappa,\alpha} + \oh_{\PP}(1)$ as $n \to \infty$,
  so that $\sqrt{k}(\widehat \kappa_\alpha - \kappa_\alpha) / \widehat{\sigma}_{\kappa,\alpha} \wto \mathcal{N}(0,1)$, provided $\sigma^2_{\kappa,\alpha} > 0$.
\end{proposition}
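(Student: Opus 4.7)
The plan is to write $\widehat\sigma^2_{\kappa,\alpha}$ as a continuous function of the estimators $\widehat\rho_\beta$ (for the relevant $\beta \subset \alpha$), of $\widehat\mu(\Delta_\alpha)$, of $\widehat\kappa_\alpha$, and of the finite-difference derivatives $\dot\kappa_{j,\alpha,n}$, and then to show that each of these quantities converges in probability to its deterministic counterpart. Given such consistency, the continuous-mapping theorem applied to the expression \eqref{eq:Hatvar-kappa} yields $\widehat\sigma^2_{\kappa,\alpha} \to \sigma^2_{\kappa,\alpha}$ in probability. The second conclusion is then an immediate Slutsky-type consequence of Proposition~\ref{theo:asymptot-kappa}: dividing $\sqrt{k}(\widehat\kappa_\alpha - \kappa_\alpha)$, which converges weakly to $\mathcal{N}(0,\sigma^2_{\kappa,\alpha})$, by $\widehat\sigma_{\kappa,\alpha} = \sigma_{\kappa,\alpha} + o_{\PP}(1)$ (which is nonzero in the limit by assumption) gives the standard normal limit.

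The first block of consistency statements is immediate from Proposition~\ref{prop:rnx} and Corollary~\ref{prop:asymptotic-hatRho}. Indeed, $\sqrt{k}(\widehat\rho_\beta - \rho_\beta) = \Oh_{\PP}(1)$ implies $\widehat\rho_\beta = \rho_\beta + \oh_{\PP}(1)$ for every $\varnothing \neq \beta \subset \alpha$. Applying \eqref{eq:B-Gamma} to $\widehat\mu$ gives $\widehat\mu(\Delta_\alpha) = \sum_{j\in\alpha}\widehat\rho_{\alpha\setminus j} - (|\alpha|-1)\widehat\rho_\alpha \to \mu(\Delta_\alpha)$ in probability, and since $\mu(\Delta_\alpha) > 0$, the continuous mapping theorem also yields $\widehat\kappa_\alpha = \widehat\rho_\alpha / \widehat\mu(\Delta_\alpha) \to \kappa_\alpha$ in probability.

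The main technical step is the consistency of $\dot\kappa_{j,\alpha,n}$ defined in \eqref{eq:dotkappan}. I would decompose
\[
  \dot\kappa_{j,\alpha,n}
  =
  \frac{\kappa_\alpha(\mb 1_\alpha + k^{-1/4}\mb e_j) - \kappa_\alpha(\mb 1_\alpha - k^{-1/4}\mb e_j)}{2 k^{-1/4}}
  +
  \frac{\Delta_n^{+} - \Delta_n^{-}}{2 k^{-1/4}},
\]
where $\Delta_n^{\pm} = (\widehat\kappa_\alpha - \kappa_\alpha)(\mb 1_\alpha \pm k^{-1/4}\mb e_j)$. Under Condition~\ref{as:partialDeriv}, the partial derivatives $\partial_j r_\beta$ exist and are continuous at $\mb 1_\beta$ for every $\beta \subset \alpha$ containing $j$, so the map $\mb x \mapsto \kappa_\alpha(\mb x)$ in \eqref{eq:kappax} is continuously differentiable in a neighborhood of $\mb 1_\alpha$ (the denominator equals $\mu(\Delta_\alpha) > 0$ at $\mb 1_\alpha$ and thus stays bounded away from zero nearby). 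Consequently, the deterministic term converges to $\dot\kappa_{j,\alpha}$. For the stochastic remainder, I would express $\widehat\kappa_\alpha(\mb x) - \kappa_\alpha(\mb x)$ as a continuously differentiable function of $\bigl(\widehat r_\beta(\mb x_\beta) - r_\beta(\mb x_\beta)\bigr)_{\beta \subset \alpha}$ near $\mb 1_\alpha$ (via the quotient and $r_\beta$'s) and then invoke the weak convergence of $\sqrt{k}(\widehat r_\beta - r_\beta)$ to a sample-continuous Gaussian process $Z_\beta$ from Proposition~\ref{prop:rnx}. Asymptotic equicontinuity of the empirical process on $[0,T]^\beta$ (in the sense of \cite{van1996weak}), combined with the fact that the two evaluation points lie at distance $2k^{-1/4} \to 0$, yields $\sqrt{k}\,|\Delta_n^{\pm} - (\widehat\kappa_\alpha - \kappa_\alpha)(\mb 1_\alpha)| = \oh_{\PP}(1)$, so that $\Delta_n^{+} - \Delta_n^{-} = \oh_{\PP}(k^{-1/2})$. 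Dividing by $2 k^{-1/4}$ gives an $\oh_{\PP}(k^{-1/4}) = \oh_{\PP}(1)$ contribution, hence $\dot\kappa_{j,\alpha,n} = \dot\kappa_{j,\alpha} + \oh_{\PP}(1)$.

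The hardest part is thus the equicontinuity argument used to control the stochastic remainder in the finite-difference step; everything else is bookkeeping through the plug-in of consistent estimators into the explicit formula \eqref{eq:sigmakappa2}. Once the consistency of all the building blocks is established, substituting them into \eqref{eq:Hatvar-kappa} and applying the continuous mapping theorem concludes the proof.
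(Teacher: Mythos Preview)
Your proposal is correct and follows essentially the same route as the paper: reduce to consistency of $\dot\kappa_{j,\alpha,n}$, split the finite difference into a deterministic part (handled by $C^1$-smoothness of $\kappa_\alpha(\cdot)$ near $\mb 1_\alpha$) and a stochastic part, and control the latter via the weak convergence of $\sqrt{k}(\widehat\kappa_\alpha-\kappa_\alpha)$ as a process, obtained from Proposition~\ref{prop:rnx} through the functional delta method. The only cosmetic difference is that the paper uses the cruder consequence $\sup_{\mb x}k^{1/4}\lvert\widehat\kappa_\alpha(\mb x)-\kappa_\alpha(\mb x)\rvert=\oh_{\PP}(1)$ (uniform $\Oh_{\PP}(1)$ of the $\sqrt{k}$-scaled process divided by $k^{1/4}$), whereas you invoke asymptotic equicontinuity to get the sharper $\Delta_n^+-\Delta_n^-=\oh_{\PP}(k^{-1/2})$; either suffices.
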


The proof relies on the weak convergence of the empirical process $\sqrt{k} \{ \widehat{\kappa}_{\alpha}(\,\cdot\,) - \kappa_\alpha(\,\cdot\,) \}$ on $[0,T]^\alpha$ for any $T>0$. This property follows in turn from Proposition~\ref{prop:rnx} and the functional delta method.

We consider a tolerance level $\kappa_{\min} \in (0, 1)$ under which the tail dependence between components $j\in\alpha$ is deemed negligible compared to the one between components $j\in\beta\subsetneq \alpha$.  In other words, we aim at testing $H_0: \kappa_\alpha\ge \kappa_{\min}$. Since $\kappa_\alpha = \rho_\alpha / \mu(\Delta_\alpha)$, the null hypothesis is that $\rho_\alpha$ is greater than some level depending on $\alpha$. Let $0<\delta<1$ be a (small) probability, and consider the test
\begin{equation}
  \label{eq:testKappa-ge}
  \tau_{\alpha,n} 
  = 
  \un\left\{
    \widehat\kappa_\alpha < \kappa_{\min} + q_\delta k^{-1/2} \widehat{\sigma}_{\kappa,\alpha}
  \right\} 
\end{equation}
where $q_\delta$ is the $\delta$-quantile of the standard normal distribution. By Proposition~\ref{prop:estim-sigma-kappa}, if $\sigma_{\kappa,\alpha} > 0$, the test in \eqref{eq:testKappa-ge} has asymptotic level $\delta$ for $H_0$ against $H_1: \kappa_{\alpha}< \kappa_{\min}$.

If $\rho_\alpha = 0$, then, in Proposition~\ref{prop:asymptotic-hatRho}, we have $\sqrt{k}( \widehat{\rho}_\alpha - \rho_\alpha ) = \oh_{\PP}(1)$ as $n \to \infty$: indeed, on the one hand, we have $\sqrt{k}( \widehat{\rho}_\alpha - \rho_\alpha ) = \sqrt{k} \widehat{\rho}_\alpha \ge 0$, and on the other hand, its limit distribution is centered Gaussian. Likewise, we have $\sqrt{k}( \widehat{\kappa}_\alpha - \kappa_\alpha ) = \oh_{\PP}(1)$ as $n \to \infty$ in Proposition~\ref{theo:asymptot-kappa} if $\kappa_\alpha = 0$. As a consequence, under the simple hypothesis $H_0 : \rho_\alpha = 0$, the asymptotic level of a test based on the asymptotic distribution of $\sqrt{k}(\widehat{\rho}_\alpha - \rho_\alpha)$ or $\sqrt{k}(\widehat{\kappa}_\alpha - \kappa_\alpha)$ cannot be controlled. This is why the test in \eqref{eq:testKappa-ge} concerns the null hypothesis $H_0 : \kappa_\alpha \ge \kappa_{\min}$ for some $\kappa_{\min} > 0$ instead. Alternatively, we propose tests based on estimators of the coefficient of tail dependence $\eta_\alpha$ in \eqref{eq:taildep-multi}. In Sections~\ref{sec:mult-extens-peng} and~\ref{sec:hill}, we consider two such estimators, extending the ones of \citet{peng1999estimation} and \citet{draisma2004bivariate}, respectively, to the multivariate setting.

\section{Coefficient of tail dependence: Peng's estimator}
\label{sec:mult-extens-peng}

For bivariate distributions, Peng's \citep{peng1999estimation} estimator of the coefficient of tail dependence $\eta = \eta_{\{1,2\}}$ is based on the property that the curve $t\mapsto (\log t, \log\PP[V_1>t, V_2>t])$ has an affine asymptote with slope $- 1/\eta$. A similar idea motivates Pickands' \citep{pickands1975statistical} estimator for the extreme value index. Estimating the ordinate of the curve at $t = n/k$ and $t = n/(2k)$ allows to estimate that slope. Under a second-order regular variation condition, \citet{peng1999estimation} shows that his estimator is asymptotically normal, both if $\eta = 1$ and if $ \eta < 1$. In the former case, the asymptotic variance depends on the tail dependence function and its partial derivatives, which are unknown but may be estimated consistently, thus leading to tests whose asymptotic levels can be controlled.

Let $\alpha \subset \dd$ have at least two elements. Recall the empirical joint tail function $\widehat{r}_\alpha$ in \eqref{eq:r_alpha:estim}. We define the multivariate extension of Peng's \citep{peng1999estimation} estimator of $\eta_\alpha$ in \eqref{eq:taildep-multi} as 
\begin{equation}
\label{eq:multi-peng}
  \widehat\eta_\alpha^P 
  = \log(2) /
  \log \{ \widehat r_{\alpha}(\mb 2_\alpha) / \widehat r_{\alpha}(\mb 1_\alpha) \}. 
\end{equation}
The asymptotic normality of $\widehat\eta_\alpha^P$ follows from Proposition~\ref{prop:rnx} and
the delta method.

\begin{proposition}
\label{thm:multi-peng}
In the setting of Proposition~\ref{prop:rnx}, we have, as $n \to \infty$ and jointly in $\alpha \subset \dd$ such that $\lvert \alpha \rvert \ge 2$ and $\rho_\alpha > 0$, the weak convergence
  \[
    \sqrt{k} (\widehat\eta_{\alpha}^P - 1) \wto 
    \frac{-1}{2\rho_\alpha \log 2} \left\{ Z_\alpha(\mb 2_\alpha) - 2 Z_\alpha(\mb 1_\alpha) \right\}. 
  \]
  The right-hand side is a $\mathcal{N}(0, \sigma_{\alpha,P}^2)$ random variable with variance
  \begin{multline}
    \label{eq:var-hateta}
    \sigma_{\alpha,P}^2 = \frac{1}{2(\rho_\alpha\log 2)^2}
    \biggl[
      \rho_\alpha - 4 \rho_\alpha^2 +  2 \sum_{j\in\alpha} \dot{\rho}_{j,\alpha} 
      r_\alpha(\mb 2_\alpha \wedge \indinf_{j}) \\
      + \sum_{j\in\alpha} \sum_{j'\in\alpha}
      \dot{\rho}_{j,\alpha}\dot{\rho}_{j',\alpha} \left\{ 3 \rho_{\{j, j'\}} - 2
      r_{\{j, j' \}}(2,1) \right\} 
    \biggr], 
  \end{multline}
  where $\rho_{\{j, j'\}} = r_{\{j, j'\}}(2, 1) = 1$ if $j = j'$ and
  where $\indinf_j \in \{1,\infty\}^\alpha$ is the vector which all coordinates equal to $1$ except for the $j$-th one which equals $\infty$, so that     $(\mb 2_\alpha \wedge \indinf_j)_m = 1$ if  $m \in \alpha\setminus j$ and $(\mb 2_\alpha \wedge \indinf_j)_m = 2$ if $m = j$.
\end{proposition}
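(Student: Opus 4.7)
The plan is a direct application of the functional delta method to Proposition~\ref{prop:rnx}. The central observation is that the joint tail function $r_\alpha$ is $1$-homogeneous (inherited from the $(-1)$-homogeneity of $\mu$ under the reciprocal push-forward), so that under $\rho_\alpha > 0$ we have $r_\alpha(\mb 2_\alpha) = 2 \rho_\alpha$ and $\partial_j r_\alpha(\mb 2_\alpha) = \partial_j r_\alpha(\mb 1_\alpha) = \dot{\rho}_{j,\alpha}$. Consequently $\log\{r_\alpha(\mb 2_\alpha)/r_\alpha(\mb 1_\alpha)\} = \log 2$, which places the centering value of $\widehat\eta_\alpha^P$ at $1$, matching $\eta_\alpha = 1$.

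I first invoke Proposition~\ref{prop:rnx} with $T = 2$, jointly over all $\alpha$ with $\rho_\alpha > 0$, and evaluate the functional convergence at the two points $\mb 1_\alpha$ and $\mb 2_\alpha$, which yields
\[
  \sqrt{k}\,\bigl( \widehat r_\alpha(\mb 1_\alpha) - \rho_\alpha,\; \widehat r_\alpha(\mb 2_\alpha) - 2 \rho_\alpha \bigr)
  \wto \bigl( Z_\alpha(\mb 1_\alpha),\; Z_\alpha(\mb 2_\alpha) \bigr),
\]
with joint convergence across $\alpha$. Since $\rho_\alpha > 0$, both coordinates of the argument of $\widehat\eta_\alpha^P$ are eventually positive in probability, so the logarithms are well-defined. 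I then apply the delta method to the smooth map $g(u,v) = \log 2 / \log(v/u)$, whose gradient at $(u,v) = (\rho_\alpha, 2\rho_\alpha)$ is $\bigl(1/(\rho_\alpha \log 2),\, -1/(2\rho_\alpha \log 2)\bigr)$. A single linear combination gives
\[
  \sqrt{k}\,(\widehat\eta_\alpha^P - 1) \wto \frac{-1}{2 \rho_\alpha \log 2}\,\bigl\{ Z_\alpha(\mb 2_\alpha) - 2 Z_\alpha(\mb 1_\alpha) \bigr\}.
\]

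It remains to identify $\sigma_{\alpha,P}^2$. I expand the right-hand side via \eqref{eq:Z_alpha(x)} and compute the variance from the covariance kernel \eqref{eq:covW}, $\PE[W(A) W(B)] = \Lambda(A \cap B)$. Three families of covariances arise: the diagonal terms $\PE[W_\alpha(\mb x) W_\alpha(\mb x')] = r_\alpha(\mb x \wedge \mb x')$ (yielding $\rho_\alpha$ and $2\rho_\alpha$); the mixed terms $\PE[W_\alpha(\mb x) W_{\{j\}}(x'_j)]$, which evaluate to $r_\alpha$ at the vector whose $j$-th entry is replaced by $\min(x_j, x'_j)$; and the purely univariate covariances $\PE[W_{\{j\}}(t) W_{\{j'\}}(t')]$, equal to $\min(t,t')$ if $j = j'$ and to $r_{\{j,j'\}}(t,t')$ otherwise. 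Repeatedly invoking $r_\alpha(\mb 2_\alpha) = 2\rho_\alpha$, $\partial_j r_\alpha(\mb 2_\alpha) = \dot{\rho}_{j,\alpha}$, Euler's identity $\sum_{j \in \alpha} \dot{\rho}_{j,\alpha} = \rho_\alpha$, and the absorption of the constraint $y_j < 1$ into $y_j < 2$ (whenever $j \in \alpha$) collapses the double sum into the compact form \eqref{eq:var-hateta}, with $r_\alpha(\mb 2_\alpha \wedge \indinf_j)$ appearing as the only non-degenerate mixed term.

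The main obstacle is the bookkeeping in the last step: a dozen covariance contributions must be organized and matched, with particular care for the diagonal convention $\rho_{\{j,j\}} = r_{\{j,j\}}(2,1) = 1$ and for cross-covariances of the type $\PE[W_\alpha(\mb 2_\alpha) W_{\{j\}}(1)]$, where the tighter constraint on the $j$-th coordinate has to be absorbed correctly. Once the algebra is executed cleanly, the identification of $\sigma_{\alpha,P}^2$ is immediate; the rest of the argument is a textbook delta-method calculation.
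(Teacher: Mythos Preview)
Your proposal is correct and follows essentially the same route as the paper's own proof: invoke Proposition~\ref{prop:rnx} at $\mb 1_\alpha$ and $\mb 2_\alpha$, apply the (ordinary) delta method to $g(u,v)=\log 2/\log(v/u)$ at $(\rho_\alpha,2\rho_\alpha)$ using the $1$-homogeneity of $r_\alpha$, and then expand $Z_\alpha$ via \eqref{eq:Z_alpha(x)} and the kernel \eqref{eq:covW} together with Euler's identity $\sum_{j\in\alpha}\dot\rho_{j,\alpha}=\rho_\alpha$ to obtain \eqref{eq:var-hateta}. The only cosmetic difference is that the paper shortcuts part of the variance bookkeeping via the scaling identity $\Var\{Z_\alpha(t\mb x)\}=t\,\Var\{Z_\alpha(\mb x)\}$, reducing the computation to $6\,\Var\{Z_\alpha(\mb 1_\alpha)\}-4\,\Cov\{Z_\alpha(\mb 2_\alpha),Z_\alpha(\mb 1_\alpha)\}$ before expanding.
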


By extending the proof of \cite[Theorem~2.1]{peng1999estimation}, it is also possible to obtain asymptotic normality of $\widehat{\eta}_\alpha^P$ in the case $\rho_\alpha=0$ and $\eta_\alpha < 1$ in~\eqref{eq:taildep-multi}. This would require a   multivariate extension of the second-order regular variation condition in \cite{peng1999estimation} in the style of Condition~\ref{as:draisma04} below. For the application as a stopping criterion in the CLEF algorithm, we are only interested in the asymptotic distribution of $\widehat\eta_\alpha^P$ under the hypothesis $\rho_\alpha>0$, so we do not pursue this idea any further. 

As in Proposition~\ref{theo:asymptot-kappa}, the asymptotic variance $\sigma_{\alpha,P}^2$ in \eqref{eq:var-hateta} involves unknown quantities, all of which we can estimate consistently. For $\alpha\subset\{1,\ldots,d\}$ and $j\in\alpha$, define
\begin{equation}
\label{eq:dotrho_j,alpha,n}
  \dot{\rho}_{j,\alpha,n} 
  = 
  \frac{1}{2 k^{-1/4}} 
  \left\{ 
    \widehat r_\alpha(\mb 1_\alpha + k^{-1/4}\mb e_j) 
    - 
    \widehat r_\alpha(\mb 1_\alpha - k^{-1/4}\mb e_j) 
  \right\},
\end{equation}
where $\mb{e}_j$ is the canonical unit vector in $\reals^\alpha$ pointing in dimension $j$. Define
\begin{multline}
\label{eq:Hatvar-hateta}
  \widehat \sigma^2_{\alpha,P} 
  = \frac{1}{2(\widehat \rho_\alpha\log 2)^2}
  \biggl[ 
    \widehat \rho_\alpha 
    + 
    \sum_{j\in\alpha} 
      \dot{\rho}_{j,\alpha, n} 
      \{ 
	- 4 \widehat \rho_\alpha 
	+ 2 \widehat r_\alpha(\mb 2_\alpha \wedge \indinf_{j}) 
      \} 
    \\
    + 
    \sum_{j\in\alpha} \sum_{j'\in\alpha} 
      \dot{\rho}_{j,\alpha,n} \dot{\rho}_{j',\alpha,n} 
      \left\{ 3 \widehat{\rho}_{\{j, j'\}} - 2 \widehat{r}_{\{j, j'\}}(2,1) \right\} 
  \biggr].
\end{multline}

\begin{proposition}
\label{prop:estimVarTerms}
In the setting of Proposition~\ref{prop:rnx}, we have $\widehat{\sigma}_{\alpha, P}^2 = \sigma_{\alpha, P}^2 + \oh_{\PP}(1)$ as $n \to \infty$, where $\alpha \subset \dd$ is such that $\lvert \alpha \rvert \ge 2$ and $\rho_\alpha > 0$. If $\sigma_{\alpha, P}^2 > 0$, then $\sqrt{k} (\widehat \eta_\alpha^P - 1) / \widehat{\sigma}_{\alpha, P} \wto \mathcal{N}(0,1)$ as $n \to \infty$.
\end{proposition}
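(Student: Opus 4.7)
The plan is to show $\widehat{\sigma}_{\alpha,P}^2 = \sigma_{\alpha,P}^2 + \oh_{\PP}(1)$ by expressing $\widehat{\sigma}_{\alpha,P}^2$ as a continuous function of finitely many evaluations of empirical joint tail functions and of finite-difference estimators of partial derivatives, each of which can be controlled separately. The conclusion for the studentized statistic then follows from Proposition~\ref{thm:multi-peng} combined with Slutsky's lemma, using the assumption $\sigma_{\alpha,P}^2 > 0$.

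First, I would observe that inspection of~\eqref{eq:Hatvar-hateta} shows that, on the event $\widehat{\rho}_\alpha > 0$, $\widehat{\sigma}_{\alpha,P}^2$ is a continuous function of the following inputs: the values $\widehat{r}_\beta(\mb{x})$ at the points $\mb{1}_\beta$, $\mb{2}_\alpha \wedge \indinf_j$, and at $(2,1)$ for $|\beta| = 2$, together with the finite differences $\dot{\rho}_{j,\alpha,n}$. Applying Proposition~\ref{prop:rnx} with $T = 2$ gives the uniform rate
\begin{equation*}
\sup_{\mb{x} \in [0,2]^\beta} \lvert \widehat{r}_\beta(\mb{x}) - r_\beta(\mb{x}) \rvert = \Oh_{\PP}(k^{-1/2})
\end{equation*}
for every $\beta \subset \alpha$ with $\lvert\beta\rvert \ge 2$, so each of the direct ingredients $\widehat{\rho}_\alpha$, $\widehat{\rho}_{\{j,j'\}}$, $\widehat{r}_\alpha(\mb{2}_\alpha \wedge \indinf_j)$ and $\widehat{r}_{\{j,j'\}}(2,1)$ converges in probability to its deterministic target.

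The step I expect to require the most care is consistency of the finite-difference estimator $\dot{\rho}_{j,\alpha,n}$ for $\dot{\rho}_{j,\alpha} = \partial_j r_\alpha(\mb{1}_\alpha)$, since the step size $k^{-1/4}$ must dominate the empirical noise level $k^{-1/2}$. I would split
\begin{equation*}
\dot{\rho}_{j,\alpha,n} = \frac{r_\alpha(\mb{1}_\alpha + k^{-1/4}\mb{e}_j) - r_\alpha(\mb{1}_\alpha - k^{-1/4}\mb{e}_j)}{2 k^{-1/4}} + \frac{A_n - B_n}{2 k^{-1/4}},
\end{equation*}
with $A_n = (\widehat{r}_\alpha - r_\alpha)(\mb{1}_\alpha + k^{-1/4}\mb{e}_j)$ and $B_n$ defined analogously with a minus sign. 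The deterministic first term tends to $\dot{\rho}_{j,\alpha}$ by continuity of $\partial_j r_\alpha$ on a neighbourhood of $\mb{1}_\alpha$ (Condition~\ref{as:partialDeriv}), while $A_n, B_n = \Oh_{\PP}(k^{-1/2})$ by the uniform rate above, so the stochastic correction is $\Oh_{\PP}(k^{-1/4}) = \oh_{\PP}(1)$. Thus $\dot{\rho}_{j,\alpha,n} \to \dot{\rho}_{j,\alpha}$ in probability.

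Combining these convergences through the continuous mapping theorem---whose applicability is secured by $\rho_\alpha > 0$, keeping us away from the singularity of the factor $(\widehat{\rho}_\alpha \log 2)^{-2}$---yields $\widehat{\sigma}_{\alpha,P}^2 = \sigma_{\alpha,P}^2 + \oh_{\PP}(1)$. Under the additional assumption $\sigma_{\alpha,P}^2 > 0$, a final application of Slutsky's lemma to the weak limit in Proposition~\ref{thm:multi-peng} gives $\sqrt{k}(\widehat{\eta}_\alpha^P - 1)/\widehat{\sigma}_{\alpha,P} \wto \mathcal{N}(0,1)$.
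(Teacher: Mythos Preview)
Your proposal is correct and follows essentially the same route as the paper, which omits the proof for brevity and only states that the main step is to verify $\dot{\rho}_{j,\alpha,n} = \dot{\rho}_{j,\alpha} + \oh_{\PP}(1)$ via Proposition~\ref{prop:rnx}. Your decomposition of $\dot{\rho}_{j,\alpha,n}$ into a deterministic central difference plus an $\Oh_{\PP}(k^{-1/4})$ stochastic correction is exactly the argument the paper has in mind (compare the analogous step for $\dot{\kappa}_{j,\alpha,n}$ in the proof of Proposition~\ref{prop:estim-sigma-kappa}), and the concluding application of the continuous mapping theorem and Slutsky's lemma is standard.
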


The proof parallels the one of Proposition~\ref{prop:estim-sigma-kappa} and is omitted for brevity. The main step is to verify that $\dot{\rho}_{j,\alpha, n} = \dot{\rho}_{j,\alpha} + \oh_{\PP}(1)$ as $n \to \infty$, which follows from Proposition~\ref{prop:rnx}. 

To test the hypothesis $H_0 : \rho_\alpha > 0$ at significance level $\delta \in (0, 1)$, we propose
\begin{equation}
  \label{eq:test-eta}
  \tau_{\alpha,\eta^P,n} 
  = 
  \un \left\{
    \widehat\eta_\alpha^P 
    < 1 - q_{1-\delta} k^{-1/2} \widehat{\sigma}_{\alpha,P}
  \right\},
\end{equation}
where $q_{1-\delta}$ is the $(1-\delta)$-quantile of the standard normal distribution. In the setting of Proposition~\ref{prop:estimVarTerms}, the test in \eqref{eq:test-eta} has asymptotic level $\delta$ for $H_0$ against $H_1: \eta_{\alpha} < 1$.

\section{Coefficient of tail dependence: Hill estimator}
\label{sec:hill}

The coefficient of tail dependence $\eta_\alpha$ in \eqref{eq:taildep-multi} is the tail index of the random variable $T_\alpha = \min_{j \in \alpha} V_j$: the function $t \mapsto \PP[T_\alpha > t]$ is regularly varying at infinity with index $-1/\eta_\alpha$. A tractable alternative to Peng's estimator for $\eta_\alpha$ is a Hill-type estimator as in \citet{draisma2001tail,draisma2004bivariate}. Replacing the unobservable Pareto variables $V_{i,j}$ by the rank-based versions $\widehat V_{i,j} = n / (n + 1 - R_{ij})$ in Section~\ref{sec:etdf} yields an approximate sample
\[
  \widehat T_{i, \alpha} = \min_{j\in\alpha} \widehat V_{i,j},
  \qquad i = 1, \ldots, n,
\]
from the distribution of $T_\alpha$. Let $\widehat T_{(1), \alpha} \le \ldots \le \widehat T_{(n), \alpha}$ denote the order statistics of $\widehat T_{1,\alpha}, \ldots, \widehat{T}_{n,\alpha}$. The Hill estimator for $\eta_\alpha$ is defined as
\begin{equation}
  \label{eq:Hill}
  \widehat \eta_\alpha^H 
  = \frac{1}{k} \sum_{i=1}^k 
  \log \frac%
    {\widehat T_{(n-i+1), \alpha}}%
    {\widehat T_{(n-k), \alpha}}.
\end{equation}
Under the second-order regular variation conditions stated below, the asymptotic normality of $\widehat \eta_\alpha^H $ follows from \cite[proof of Theorem~2.1]{draisma2004bivariate}. The results in the cited reference cover the bivariate case only. In this section, we verify that they remain valid in any dimension $d\ge 2$, and we provide the general expression for  the asymptotic variance. Put $E_\alpha = [0, \infty]^\alpha \setminus \{ \bm{\infty}_\alpha \}$.


\begin{condition}
\label{as:draisma04}
For each $\alpha\subset\{1,\ldots,d\}$ with $|\alpha|\ge 2$, there exist functions $c_\alpha, c_{1,\alpha}: E_\alpha\to [0, \infty)$   such that $c_{1,\alpha}$ is neither constant nor a multiple of $c_\alpha$, and there exists $q_{1,\alpha}: (0, \infty) \to (0, \infty)$, with $q_{1,\alpha}(t) \to 0$ as $t\to 0$, such that, for all $\mb x \in E_\alpha$, we have
\begin{equation*}
  \lim_{t \to 0} 
    \left\{ 
      \frac%
	{\PP[ \forall j \in \alpha : 1 - F_j(X_j) \le tx_j]}%
	{\PP[ \forall j \in \alpha : 1 - F_j(X_j) \le t ]} 
      - c_\alpha(\mb x) 
    \right\} 
    \Big/ 
    q_{1, \alpha}(t) 
    = c_{1,\alpha}(\mb x).
\end{equation*}
\end{condition}

Under Condition~\ref{as:draisma04}, the function $q_\alpha(t) = \PP[ \forall j \in \alpha : 1 - F_j(X_j) \le t ]$ is regularly varying at $0$ with some index $1/\eta_\alpha$. Condition~\ref{as:draisma04} implies that the first-order condition~\eqref{eq:taildep-multi} holds with the same index $1/\eta_\alpha$.  In addition, $c_\alpha(\mb 1_\alpha) = 1$ and $c_\alpha$ is homogeneous of order $1/\eta_\alpha$, i.e., $c_\alpha(t \mb x ) =
t^{1/\eta_\alpha}c_\alpha(\mb x)$ for $t>0$, see \cite{draisma2001tail, draisma2004bivariate}. Under the regular variation assumption~\eqref{eq:reg-var}, we have  $\rho_\alpha = \lim_{t \to 0} q_\alpha(t)/t$, so that, under Condition~\ref{as:draisma04}, $\rho_\alpha > 0$ implies $\eta_\alpha = 1$, as in \cite{draisma2004bivariate} for the bivariate case. Finally, if $\rho_\alpha>0$, then $c_\alpha(\mb x )= r_\alpha(\mb x) /r_\alpha(\mb 1_\alpha) = r_\alpha(\mb x) / \rho_\alpha$. Note that in~\cite{draisma2004bivariate}, our $\rho_\alpha$ is denoted by $l$ for $\alpha = \{1,2\}$. 

The asymptotic variance of the Hill estimator \eqref{eq:Hill} involves a Gaussian process whose distribution depends on whether $\rho_\alpha=0$ or $\rho_\alpha > 0$. As in \cite{draisma2004bivariate}, introduce a centered Gaussian process $W_1$ on $E_\alpha$ with covariance function $\PE[ W_1(\mb x) \, W_1(\mb y) ]  = c_\alpha( \mb x\wedge \mb y)$ for $\mb x,\mb y \in E_\alpha$. Recall the stochastic process $Z_\alpha$ in \eqref{eq:Z_alpha(x)} and the random variable $G_\alpha = Z_\alpha(\mb{1}_\alpha)$ in \eqref{eq:G_alpha}. 

\begin{proposition}
\label{prop:normality-hill}
Let ${\mb X}_i = (X_{i,1}, \ldots, X_{i,d})$, for $i \in \{1,\ldots,n\}$, be an independent random sample from $F$, having continuous margins and satisfying \eqref{eq:reg-var}. Let $k = k(n)\to \infty$ as $n\to\infty$, while $k(n) = \oh(n)$. If Conditions~\ref{as:bias},~\ref{as:small-k},~\ref{as:partialDeriv}, and~\ref{as:draisma04} hold, then, as $n \to \infty$,
  \[
    \sqrt{k} \left( \widehat \eta_\alpha^H - \eta_\alpha \right)
    \wto \mathcal N(0,\sigma_{\alpha,H}^2), 
  \]
with $\sigma_{\alpha,H}^2 = \eta_\alpha^2 \Var\{\tilde W(\mb 1_\alpha) \}$, where $\tilde W(\mb x) = W_1(\mb x)$ if $\rho_\alpha = 0$ and $\tilde W(\mb{x}) = \rho_\alpha^{-1/2} Z_\alpha(\mb x)$ if $\rho_\alpha > 0$. In particular, if $\rho_\alpha > 0$, we have
\begin{equation}
\label{eq:sigma_alpha,H}
  \sigma_{\alpha,H}^2
  =
  \rho_\alpha^{-1} \Var(G_\alpha)
    = 1 - 2 \rho_\alpha 
    +
    \rho_\alpha^{-1}
    \sum_{j \in \alpha} \sum_{j' \in \alpha}
    \dot{\rho}_{j,\alpha}\dot{\rho}_{j',\alpha}
    \rho_{\{j,j'\}}.
\end{equation}
\end{proposition}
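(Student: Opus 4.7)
The plan is to adapt \citet[proof of Theorem~2.1]{draisma2004bivariate} from $\lvert\alpha\rvert=2$ to an arbitrary subset $\alpha\subset\{1,\ldots,d\}$ with $\lvert\alpha\rvert\ge 2$, by representing $\widehat\eta_\alpha^H$ as a continuous functional of a univariate tail empirical process indexed by a scalar $x>0$. Under Condition~\ref{as:draisma04}, $t\mapsto \PP[T_\alpha>t]$ is regularly varying at infinity with index $-1/\eta_\alpha$, so the standard integral representation of the Hill estimator reduces the proof to the weak convergence, in $\ell^\infty([\epsilon,T])$ for arbitrary $0<\epsilon<T<\infty$, of the tail empirical ratio
\begin{equation*}
  x\mapsto \sqrt k \left\{\frac{\widehat r_\alpha(x\,\mb 1_\alpha)}{\rho_\alpha^{(n)}}-c_\alpha(x\,\mb 1_\alpha)\right\},
  \qquad
  \rho_\alpha^{(n)}=q_\alpha(k/n)/(k/n),
\end{equation*}
followed by an application of the functional delta method through the generalized-inverse map and the integral $\int_0^1 \log(1/x)\,d(\cdot)$.

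Two regimes require separate treatment. When $\rho_\alpha>0$, Condition~\ref{as:draisma04} forces $\eta_\alpha=1$, the normalizer satisfies $\rho_\alpha^{(n)}\to \rho_\alpha$, and $c_\alpha(\mb x)=r_\alpha(\mb x)/\rho_\alpha$. The weak convergence of the ratio then follows directly from Proposition~\ref{prop:rnx} specialized to $\mb x=x\,\mb 1_\alpha$, with limit $\rho_\alpha^{-1}Z_\alpha(x\,\mb 1_\alpha)$, matching $\tilde W=\rho_\alpha^{-1/2}Z_\alpha$ once the square-root prefactors issued by the Hill functional are collected. When $\rho_\alpha=0$, Proposition~\ref{prop:rnx} is degenerate and one must build the tail empirical process directly from Condition~\ref{as:draisma04}, verifying tightness on $E_\alpha$ exactly as in \cite[Sections~2--3]{draisma2004bivariate}, which produces the Gaussian limit $W_1$. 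In both regimes, Conditions~\ref{as:bias} and~\ref{as:small-k} make the bias $\oh(1/\sqrt k)$, while Condition~\ref{as:partialDeriv} provides the smoothness needed to replace true margins by empirical margins in the limit.

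Once weak convergence is established, the remainder is algebra: $\sqrt k(\widehat\eta_\alpha^H-\eta_\alpha)\wto \eta_\alpha\,\tilde W(\mb 1_\alpha)$, as claimed. For the explicit formula~\eqref{eq:sigma_alpha,H} under $\rho_\alpha>0$, Corollary~\ref{prop:asymptotic-hatRho} applied with $\alpha'=\alpha$ gives
\begin{equation*}
  \Var(G_\alpha) = \rho_\alpha - 2\rho_\alpha\sum_{j\in\alpha}\dot\rho_{j,\alpha} + \sum_{j\in\alpha}\sum_{j'\in\alpha}\dot\rho_{j,\alpha}\dot\rho_{j',\alpha}\,\rho_{\{j,j'\}}.
\end{equation*}
Since $r_\alpha$ is homogeneous of degree one (because $\mu$ has degree $-1$), Euler's identity at $\mb 1_\alpha$ yields $\sum_{j\in\alpha}\dot\rho_{j,\alpha}=\rho_\alpha$, and dividing $\Var(G_\alpha)$ by $\rho_\alpha$ delivers exactly~\eqref{eq:sigma_alpha,H}.

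The principal obstacle is the regime $\rho_\alpha=0$: there Proposition~\ref{prop:rnx} no longer suffices, and the tail empirical process must be analyzed on the whole cone $E_\alpha$, with the requisite envelope and asymptotic equicontinuity properties verified for arbitrary $\lvert\alpha\rvert$. Because the arguments of \citet{draisma2004bivariate} are naturally indexed by $\mb x\in E_\alpha$ rather than $\mb x\in\rset^2$, the extension is largely bookkeeping; the delicate step is the uniform control of the rank approximation $\widehat T_{i,\alpha}-T_{i,\alpha}$, which relies on the joint weak convergence over all single-coordinate subsets $\{j\}\subset\alpha$ already supplied by Proposition~\ref{prop:rnx}.
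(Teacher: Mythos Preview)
Your outline follows the right general strategy---representing $\widehat\eta_\alpha^H$ as a functional of a tail empirical process and invoking the machinery of \cite{draisma2004bivariate} and \cite{drees1998evindex}---but there is a genuine gap in the analytic step. You claim that weak convergence of the tail empirical ratio in $\ell^\infty([\epsilon,T])$ for arbitrary $0<\epsilon<T<\infty$ suffices. It does not: the Hill functional $\int_0^1 \log(1/x)\,\ud(\cdot)$ (equivalently, $\Theta(z)=\int_0^1\log^+\{z(t)/z(1)\}\,\ud t$ on the quantile scale) integrates over $(0,1]$, and the integrand is unbounded near $0$. Uniform convergence on compacts bounded away from $0$ does not control that integral; the map $\Theta$ is not Hadamard differentiable with respect to the sup norm on $[\epsilon,1]$ alone. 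The paper handles precisely this point by working in the weighted space $D_{\eta,h}=\{z:\lim_{t\to 0}t^\eta h(t)z(t)=0\}$ with $h(t)=t^{1/2+\epsilon}$, establishing weak convergence of the tail \emph{quantile} process $Q_n(t)=\widehat T_{(n-\lfloor kt\rfloor)}$ in $D_{\eta,h}$ via \cite[Lemma~6.2]{draisma2004bivariate}, and then citing \cite[Example~3.1]{drees1998evindex} for Hadamard differentiability of $\Theta$ tangentially to the continuous subspace of $D_{\eta,h}$. Without this weighted-norm control near $0$, the delta-method step does not go through.

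A secondary point: you write $\sqrt k(\widehat\eta_\alpha^H-\eta_\alpha)\wto\eta_\alpha\,\tilde W(\mb 1_\alpha)$ as if the limit were the process evaluated at a single point. The actual limit is the linear functional $\Theta'$ applied to $t\mapsto\eta_\alpha t^{-(\eta_\alpha+1)}\tilde W(t\,\mb 1_\alpha)$, an integral over $(0,1]$. That this integral has variance exactly $\eta_\alpha^2\Var\{\tilde W(\mb 1_\alpha)\}$ is a non-trivial calculation relying on the homogeneity $\Cov(\tilde W(\lambda s\,\mb 1_\alpha),\tilde W(\lambda t\,\mb 1_\alpha))=\lambda\,\Cov(\tilde W(s\,\mb 1_\alpha),\tilde W(t\,\mb 1_\alpha))$; the paper carries it out explicitly (the ``$A=B$'' cancellation in its Step~3). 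Your derivation of \eqref{eq:sigma_alpha,H} from $\Var(G_\alpha)$ via Euler's identity $\sum_{j\in\alpha}\dot\rho_{j,\alpha}=\rho_\alpha$ is correct.
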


The proof of Proposition~\ref{prop:normality-hill} is based on the arguments developed in the proofs of \cite[Theorem~2.1]{draisma2004bivariate}, \cite[Theorem~3.2]{drees1998smooth}, and \cite[Example~3.1]{drees1998evindex}, which we gather in Appendix~\ref{sec:appendix}.   

%

Again, the unknown terms in \eqref{eq:sigma_alpha,H} may be replaced by their empirical counterparts, leading to an asymptotically consistent test. Recall $\dot{\rho}_{j,\alpha,n}$ in \eqref{eq:dotrho_j,alpha,n} and define
  \[
    \widehat\sigma^2_{\alpha, H} 
    = 1 - 2 \widehat\rho_\alpha 
    +
    \widehat\rho_\alpha^{-1}
    \sum_{j\in\alpha} \sum_{j'\in\alpha}
    \dot{\rho}_{j,\alpha,n}\dot{\rho}_{j',\alpha, n}
    \widehat\rho_{\{j,j'\}}.
  \]
  
The proof of the consistency of the variance estimator follows the same lines as the proofs of Propositions~\ref{prop:estim-sigma-kappa} and~\ref{prop:estimVarTerms} and is omitted.

\begin{cor}
\label{cor:testHill-estimVar}
Under the conditions of Proposition~\ref{prop:normality-hill}, if $\rho_\alpha > 0$, we have $\widehat\sigma_{\alpha,H}^2 = \sigma_{\alpha,H}^2 + \oh_{\PP}(1)$ as $n \to \infty$ and thus $\sqrt{k} (\widehat \eta_\alpha^H - 1) / \widehat{\sigma}_{\alpha, P} \wto \mathcal{N}(0,1)$, provided $\sigma_{\alpha,H}^2 > 0$.
\end{cor}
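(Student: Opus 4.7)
The plan is to follow the same template as the proofs of Propositions~\ref{prop:estim-sigma-kappa} and~\ref{prop:estimVarTerms}: establish that each ingredient of $\widehat{\sigma}_{\alpha,H}^2$ is consistent for its population counterpart, then conclude by the continuous mapping theorem and Slutsky's lemma. Since $\rho_\alpha > 0$ and Condition~\ref{as:draisma04} is in force, we have $\eta_\alpha = 1$, so the weak convergence $\sqrt{k}(\widehat{\eta}_\alpha^H - 1) \wto \mathcal{N}(0,\sigma_{\alpha,H}^2)$ follows from Proposition~\ref{prop:normality-hill}; dividing by a consistent estimator $\widehat{\sigma}_{\alpha,H}$ of the positive quantity $\sigma_{\alpha,H}$ then yields the asserted standard normal limit.

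First, I would dispatch the elementary pieces. By Corollary~\ref{prop:asymptotic-hatRho}, $\widehat{\rho}_\alpha - \rho_\alpha = \Oh_{\PP}(k^{-1/2})$ and similarly $\widehat{\rho}_{\{j,j'\}} - \rho_{\{j,j'\}} = \Oh_{\PP}(k^{-1/2})$, jointly in indices. Since $\rho_\alpha > 0$, the map $x \mapsto 1/x$ is continuous at $\rho_\alpha$, so $\widehat{\rho}_\alpha^{-1} \to \rho_\alpha^{-1}$ in probability.

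The main ingredient, and the step I expect to be the core of the argument, is the consistency of the finite-difference derivative estimator: I need to show that $\dot{\rho}_{j,\alpha,n} = \dot{\rho}_{j,\alpha} + \oh_{\PP}(1)$ as $n \to \infty$. By Proposition~\ref{prop:rnx}, the process $\sqrt{k}\{\widehat{r}_\alpha - r_\alpha\}$ converges weakly in $\ell^\infty([0,T]^\alpha)$, so its supremum on a neighborhood of $\mb{1}_\alpha$ is $\Oh_{\PP}(1)$, giving $\widehat{r}_\alpha(\mb{1}_\alpha \pm k^{-1/4}\mb{e}_j) = r_\alpha(\mb{1}_\alpha \pm k^{-1/4}\mb{e}_j) + \Oh_{\PP}(k^{-1/2})$. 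Substituting into \eqref{eq:dotrho_j,alpha,n} yields
\[
  \dot{\rho}_{j,\alpha,n}
  =
  \frac{r_\alpha(\mb{1}_\alpha + k^{-1/4}\mb{e}_j) - r_\alpha(\mb{1}_\alpha - k^{-1/4}\mb{e}_j)}{2 k^{-1/4}}
  + \Oh_{\PP}(k^{-1/4}),
\]
and the deterministic first term converges to $\partial_j r_\alpha(\mb{1}_\alpha) = \dot{\rho}_{j,\alpha}$ as $k \to \infty$ by Condition~\ref{as:partialDeriv}. The bandwidth $k^{-1/4}$ is precisely chosen so that the deterministic bias (of order $k^{-1/4}$ by continuity of $\partial_j r_\alpha$) and the stochastic error ($k^{-1/2}$ divided by $k^{-1/4}$) both vanish.

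With these convergences in hand, the definition of $\widehat{\sigma}_{\alpha,H}^2$ is a continuous function (on the open set where $\widehat{\rho}_\alpha > 0$) of $\widehat{\rho}_\alpha$, $\widehat{\rho}_{\{j,j'\}}$ and the $\dot{\rho}_{j,\alpha,n}$, which all converge jointly in probability to their targets. The continuous mapping theorem then gives $\widehat{\sigma}_{\alpha,H}^2 \to \sigma_{\alpha,H}^2$ in probability. Finally, since $\sigma_{\alpha,H}^2 > 0$ by assumption, $\widehat{\sigma}_{\alpha,H}$ is bounded away from $0$ with probability tending to one, and Slutsky's lemma combined with Proposition~\ref{prop:normality-hill} yields $\sqrt{k}(\widehat{\eta}_\alpha^H - 1)/\widehat{\sigma}_{\alpha,H} \wto \mathcal{N}(0,1)$, completing the proof.
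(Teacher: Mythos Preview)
Your proposal is correct and follows exactly the route the paper indicates: the paper omits the proof entirely, stating only that it ``follows the same lines as the proofs of Propositions~\ref{prop:estim-sigma-kappa} and~\ref{prop:estimVarTerms}'', whose key step is precisely the consistency $\dot{\rho}_{j,\alpha,n} = \dot{\rho}_{j,\alpha} + \oh_{\PP}(1)$ obtained from Proposition~\ref{prop:rnx}. Your argument fills in those details faithfully (and you have tacitly corrected the typo $\widehat{\sigma}_{\alpha,P}$ for $\widehat{\sigma}_{\alpha,H}$ in the statement).
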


We may exploit Corollary~\ref{cor:testHill-estimVar} to test $H_0 : \rho_\alpha > 0$ in the same way as we did by using Peng's estimator in \eqref{eq:test-eta}: at significance level $\delta \in (0, 1)$, the null hypothesis is rejected in favour of $H_1: \eta_\alpha < 1$ when $\widehat\eta_\alpha^H < 1 - q_{1-\delta} k^{-1/2} \widehat{\sigma}_{\alpha,H}$.

\begin{remark}
The condition $\sigma_{\alpha,H}^2 > 0$ in Corollary~\ref{cor:testHill-estimVar} is satisfied whenever $0<\rho_\alpha<1$. Indeed, in \eqref{eq:sigma_alpha,H}, we have $\rho_{\{j,j'\}} \ge \rho_\alpha$ and $\dot{\rho}_{j,\alpha} \dot{\rho}_{j',\alpha} \ge 0$, whence $\sigma_{\alpha,H}^2$ $\ge 1 - 2\rho_\alpha + \sum_{(j,j')\in\alpha^2} \dot{\rho}_{j,\alpha} \dot{\rho}_{j',\alpha}$ $= 1 - 2 \rho_\alpha + \rho_\alpha^2 = (1 - \rho_\alpha)^2$. 
\end{remark}

\section{Simulation study}
\label{sec:simu-study}

Our aim is to compare the finite sample performance of the various tests proposed in Sections~\ref{sec:test-kappa}, \ref{sec:mult-extens-peng} and~\ref{sec:hill} within the framework of the CLEF algorithm, the pseudo-code of which is given in Appendix~\ref{sec:appendix-CLEF}. Three variants of the CLEF algorithm are obtained by varying the criterion according to which a subset $\alpha$ is declared as tail-dependent:
  $\widehat \kappa_\alpha > \kappa_{\min} - q_{\delta} \widehat \sigma_{\kappa,\alpha} / \sqrt{k}$ for CLEF-asymptotic;
 $\widehat\eta_{\alpha,P} > 1 - q_{\delta} \widehat \sigma_{\alpha,P} / \sqrt{k}$ for CLEF-Peng; and
 $\widehat\eta_{\alpha,H} > 1 - q_{\delta} \widehat \sigma_{\alpha,H} / \sqrt{k}$ for CLEF-Hill.
The original CLEF criterion was $\widehat{\kappa}_\alpha > C$ for some constant $C$ chosen by the user.
For completeness, the output of the DAMEX algorithm \cite{goixsparse} is included in the comparison. 

 

In practice, the dependence tests based on the tail dependence coefficient should not be carried out to the letter when the test statistic is not defined or when its estimated variance is infinite. Thus, in our experiments, CLEF-Peng and CLEF-Hill are modified so as to take into account additional, common-sense stopping criteria. A subset $\alpha$ will \emph{not} be part of the list returned by the algorithms under the following conditions: 



 \begin{enumerate}
 \item Concerning  CLEF-Hill, when $\widehat \rho_\alpha = 0$, that is, no extreme record  impacts  all coordinates in $\alpha$, the estimated variance of the Hill estimator of $\eta_\alpha$ is infinite. Therefore, $\widehat \rho_\alpha = 0$ is considered as a stopping criterion in  CLEF-Hill.
 \item  Concerning CLEF-Peng, when $\widehat r_{\alpha}(\mb 2_\alpha) = \widehat r_{\alpha}(\mb 1_\alpha)$, the Peng estimator \eqref{eq:multi-peng} is ill-defined. Such a case arises when there are very few points in the joint tail within the subspace generated by $\alpha$.
When the estimated derivatives $\dot{\rho}_{j,\alpha,n}$  are close to zero, and when $ \widehat\rho_\alpha\ll 1$,  the estimated variance $\widehat \sigma_{\alpha,P}^2$ in~\eqref{eq:Hatvar-hateta} becomes large, preventing rejection of the null hypothesis. 
To prevent these issues, each of the conditions $\widehat\rho_\alpha< 0.05$  and $\widehat r_{\alpha}(\mb 2_\alpha) = \widehat r_{\alpha}(\mb 1_\alpha)$ are declared as a stopping criterion in CLEF-Peng.
\end{enumerate}



\paragraph{Experimental setting.}
CLEF~\cite{chiapinofeature} is designed to face situations where  DAMEX~\citep{goixsparse} fails to exhibit a clear-cut dependence  structure. A major issue reported in~\cite{chiapinofeature} for certain hydrological data is the high variability of the groups of features for which large values occur simultaneously. Because of this, the empirical exponent measure $\widehat\mu$ assigns low mass to any sub-region partitioning the sample space, see Remark~\ref{rm:relationship-goix}. The empirical finding motivating the latter work is that the various subsets $\alpha$ involved in simultaneous extreme records could nevertheless be clustered, meaning that many of them have a significant intersection, whereas many symmetric differences comprise just a single or at most a few features.

A natural assumption in this context is that a `true' list of dependent subsets $\mathcal{M} = \{\alpha_1,\ldots, \alpha_K\}$ exists such that $\mu(\mathcal{C}_\alpha)>0$ for $\alpha\in\mathcal{M}$ and that noisy features are involved in each extreme event. Observed large records then concern groups of the kind $\alpha' = \alpha\cup\{j\}$, where $\alpha\in\mathcal{M}$ and $j\in\dd\setminus\alpha$.
  
In our experiments, datasets are generated as follows: The  dimension is fixed to $d=100$. A family of `true' dependent subsets $\mathcal{M}= \{\alpha_1,\ldots,\alpha_K\}$ of cardinality $K = 80$ is randomly chosen: the subset sizes $|\alpha|$ follow a truncated geometric distribution, with a maximum subset size set to $8$. For simplicity, we forbid nested subsets, so $\alpha_j\not\subset\alpha_k$ whenever $j \ne k$. The maximal elements of $\mathbb{M} = \{\alpha\subset\dd \mid \rho_\alpha>0\}$ are then precisely the elements of $\mathcal{M}$, as explained in  Remark~\ref{rm:relationship-goix}. Finally, two different subsets may have at most two features in common. 

Once the dependence structure $\mathcal{M}$ has been fixed, the data $\mb{X}_1, \ldots, \mb{X}_n$ are sampled independently from  $d$-dimensional asymmetric logistic distributions \cite{tawn1990modelling}, using Algorithm $2.2$ in
\cite{stephenson2003simulating}. The underlying `true' distribution function is 
\begin{equation}
\label{eq:asym-logistic}
  G(\mb x) = \exp\biggl[- \sum_{m=1}^K\Bigl\{ \sum_{j\in\alpha_{m}} (\lvert{\cal A}(j)\rvert x_j)^{-1/w_{\alpha_m}} \Bigr\}^{w_{\alpha_m}}\biggr],
\end{equation}
where ${\cal A}(j) = \{\alpha \in \mathcal{M} \mid j \in \alpha\}$ and $w_{\alpha_m}$ is a dependence parameter which is set to $0.1$ in our simulations. Actually, to mimic the noisy situation described above, each point $\mb X_i$ is simulated according to a slightly different version, $G_i$, of $G$. For each $i = 1, \ldots, n$ and $k = 1, \ldots, K$, we randomly select an additional `noisy feature' $j_{i,k} \in \{1, \ldots, d\} \setminus \alpha_k$ and set $\alpha_{i,k}' = \alpha_k \cup \{ j_{i,k} \}$. Then $\mathcal{M}_i' = \{ \alpha_{i,1}', \ldots, \alpha_{i,K}' \}$ is the collection of `noisy subsets' for $\mb{X}_i$ and $G_i(\mb{x})$ is as in \eqref{eq:asym-logistic} with $\mathcal{A}(j)$ replaced by $\mathcal{A}'_i(j) = \{\alpha' \in \mathcal{M}'_i \mid j \in \alpha'\}$.

\paragraph{Results.}
We generate datasets of size $n=5\mathrm{e}4$ and $n=1\mathrm{e}5$. For each sample size, $50$ independent datasets are simulated according to the
procedure summarized in the preceding paragraph. We compare the average performance of the three proposed versions of CLEF, together with the original CLEF and DAMEX algorithms,  for different choices of $k$
and confidence level $\delta$.  



\begin{table}[!ht]
    \caption{ Average number of recovered clusters and errors of
      CLEF-asymptotic
      ($\kappa_{\min}=0.08$), CLEF-Peng, CLEF-Hill, CLEF and DAMEX on
      $50$ datasets. Confidence level for the tests: $\delta=0.001$. Standard deviations over the 50 samples in brackets. Bold face indicates the best performing
      algorithm on average for a given $n$ and a given choice of $k/n$, the proportion of extreme data used.}
      \label{hpc-noise-table-deltaBig}
  \centering
      \begin{tabular}{@{}l@{\quad}ccccc@{}}
      \toprule
      $n=5\mathrm{e}4$ & $k/n$ & recovered &  subset  errors & superset  errors & other errors \\
      \midrule
        CLEF-asymptotic 
            & 0.003 & 71.1 (3.0) & 7.4 (4.7) & 5.1 (2.1) & 28.0 (13.3)\\
              & 0.005 & 73.0 (3.7) & 8.0 (6.3) & 2.4 (1.7) & 14.6 \phantom{0}(8.9) \\[.7ex]
      CLEF-Peng 
                & 0.003 & {\bf 79.70 (0.7)} & {\bf 1.00 (2.5)} & {\bf 0. (0.)} & 3.9 (2.7)\\
                & 0.005 & {\bf 79.98 (0.1)} & {\bf 0.06 (0.4)} & {\bf 0. (0.)} & 0.9 (0.9)\\[.7ex]
      CLEF-Hill  
                & 0.003 & 79.0 (1.4) & 2.4 (3.5) & 0.04 (0.2) & 17.9 (7.0)\\
                & 0.005 & 75.7 (2.4) & 9.2 (6.8) & {\bf 0. (0.)} & {\bf 0. (0.)}\\[.7ex]
           CLEF 
           & 0.003 & 69.9 (4.4) & 16.2 (8.1) & 0.5 (0.6) & {\bf 2.3 (2.2)}\\
           & 0.005 & 75.0 (3.6) & \phantom{0}8.1 (6.4) & 0.2 (0.5) & 0.9 (1.2) \\[.7ex]
      DAMEX 
            & 0.003 & 0.6 (0.2) & 1.7 (1.4) & 32.9 (5.6) & 45.4 (5.9) \\
            & 0.005 & 0.1 (0.4) & 2.4 (1.5) & 18.3 (5.5) & 59.1 (5.9) \\
	\midrule
$n=1\mathrm{e}5$ &&&& &\\                
	\midrule
        CLEF-asymptotic  & 0.003 & 73.2 (3.7) & \phantom{0}9.5 (6.7) & 0.9 (0.8) & 4.7 (2.7)\\
                & 0.005 & 72.6 (4.4) & 11.7 (7.6) & 0.1 (0.4) & 0.5 (0.9) \\[.7ex]
        CLEF-Peng & 0.003 & {\bf 79.9 (0.2)} & {\bf 0.2 (1.0)} & {\bf 0. (0.)} & 0.1 (0.4)\\
                & 0.005 & {\bf 80.0 (0.)\phantom{0}} & {\bf 0.\phantom{0} (0.)\phantom{0}} & {\bf 0. (0.)} & {\bf 0. (0.)}\\[.7ex]
        CLEF-Hill & 0.003 & 77.0 (2.0) & \phantom{0}6.1 \phantom{0}(4.6) & {\bf 0. (0.)} & {\bf 0. (0.)}\\
                & 0.005 & 67.2 (4.8) & 22.8 (10.4) & {\bf 0. (0.)} & {\bf 0. (0.)} \\[.7ex]
        CLEF & 0.003 & 75.2 (3.2) & 7.5 (5.9) & 0.0\phantom{0} (0.2) & 0.2\phantom{0} (0.5)\\
              & 0.005 & 77.9 (2.3) & 3.2 (3.9) & 0.02 (0.1) & 0.02 (0.1) \\[.7ex]
        DAMEX & 0.003 & 0.04 (0.2) & 1.3 (1.0) & 24.4 (6.7) & 54.2 (7.0) \\
              & 0.005 & 0.1\phantom{0} (0.3) & 1.9 (1.6) & 10.3 (3.7) & 67.6 (4.7) \\
      \bottomrule
    \end{tabular} 
\end{table}

\begin{table}[!ht]
   \caption{Same setting as Table \ref{hpc-noise-table-deltaBig} with $\delta=0.0001$}
   \label{hpc-noise-table-deltaSmall}
  \centering
  \begin{tabular}{@{}l@{\quad}ccccc}
    \toprule
    $n=5\mathrm{e}4$ & $k/n$ & recovered &  subset  errors & superset  errors & other errors \\
    \midrule
    CLEF-asymptotic & 0.003 & 71.8 (2.4) & 2.3 (2.5) & 7.8 (2.8) & 41.9 (19.3) \\
            & 0.005 & 73.5 (2.8) & 3.7 (3.8) & 4.8 (2.5) & 25.8 (12.2) \\[.7ex]
    CLEF-Peng 
            & 0.003 & {\bf 79.7 (0.7)} & {\bf 1.0 (2.5)} & {\bf 0. (0.)} & 3.9 (2.7)\\
            & 0.005 & {\bf 80.0 (0.1)} & {\bf 0.1 (0.4)} & {\bf 0. (0.)} & 0.9 (0.9)\\[.7ex]
    CLEF-Hill 
            & 0.003 & 79.5 (0.8) & {\bf 0.3 (1.1)} & 0.5 (0.8) & 142.2 (33.2) \\
            & 0.005 & 79.2 (1.0) & 1.6 (2.3) & {\bf 0. (0.)} & {\bf 0.2 (0.5)} \\[.7ex]
           CLEF 
           & 0.003 & 69.9 (4.4) & 16.2 (8.1) & 0.5 (0.6) & 2.3 (2.2)\\
           & 0.005 & 75.0 (3.6) & \phantom{0}8.1 (6.4) & 0.2 (0.5) & 0.9 (1.2) \\[.7ex]
      DAMEX 
            & 0.003 & 0.6 (0.2) & 1.7 (1.4) & 32.9 (5.6) & 45.4 (5.9) \\
            & 0.005 & 0.1 (0.4) & 2.4 (1.5) & 18.3 (5.5) & 59.1 (5.9) \\
      \midrule
    $n=1\mathrm{e}5$ & &&&& \\
    \midrule
    CLEF-asymptotic  & 0.003 & 75.7 (2.8) & 3.7 (3.8) & 2.0 (1.4) & 11.0 (5.5) \\
            & 0.005 & 76.0 (2.9) & 5.6 (4.5) & 0.4 (0.7) & \phantom{0}1.9 (1.9) \\[.7ex]
    CLEF-Peng & 0.003 & {\bf 79.9 (0.2)} & {\bf 0.2 (1.0)} & {\bf 0. (0.)} & 0.1 (0.4)\\
            & 0.005 & {\bf 80.\phantom{0} (0.)\phantom{0}} & {\bf 0.\phantom{0} (0.)\phantom{0}} & {\bf 0. (0.)} & {\bf 0. (0.)}\\[.7ex]
    CLEF-Hill & 0.003 & 79.5 (1.0) & 1.2 (2.3) & {\bf 0. (0.)} & {\bf 0.1 (0.2)} \\
            & 0.005 & 75.4 (2.8) & 8.7 (5.2) & {\bf 0. (0.)} & {\bf 0.\phantom{0} (0.)\phantom{0}} \\[.7ex]
          CLEF & 0.003 & 75.2 (3.2) & 7.5 (5.9) & 0.0\phantom{0} (0.2) & 0.2\phantom{0} (0.5)\\
           & 0.005 & 77.9 (2.3) & 3.2 (3.9) & 0.02 (0.1) & 0.02 (0.1) \\[.7ex]
      DAMEX & 0.003 & 0.04 (0.2) & 1.3 (1.0) & 24.4 (6.7) & 54.2 (7.0) \\
            & 0.005 & 0.1\phantom{0} (0.3) & 1.9 (1.6) & 10.3 (3.7) & 67.6 (4.7) \\
    \bottomrule
    \end{tabular}
\end{table}

Tables~\ref{hpc-noise-table-deltaBig}
and~\ref{hpc-noise-table-deltaSmall} gather the results 
for a  confidence level $\delta$ equal to $0.001$ and $0.0001$, respectively.
In both tables, 
the results obtained with the original version of CLEF
and DAMEX are included in the comparison with an identical choice of tuning parameters, so that the last two lines of the two tables are the same. 
In CLEF,  the threshold $C$ was chosen  by trial and error  in the interval $(0, \kappa_{\min})$, namely $C=0.05$. 
Imposing that $C < \kappa_{\min}$ is intended to reproduce the effect of the variance term upon the stopping criterion  in CLEF-asymptotic.   
In DAMEX, the $80$ subsets with highest empirical mass are retained and  the subspace thickening parameter $\epsilon$ is set to the default  value of $0.1$, following the guidelines of the authors. 

Each algorithm produces a list, $\widehat{\mathbb{M}}$, of groups of features $\alpha \in \{1, \ldots, d\}$. This list is to be compared with the one of $K = 80$ `true' subsets $\mathcal{M}$. The performance of each algorithm is measured in terms of two criteria: the number of `true' subsets $\alpha\in\mathcal{M}$ that appear in $\widehat{\mathbb{M} }$ (third column of Tables~\ref{hpc-noise-table-deltaBig} and~\ref{hpc-noise-table-deltaSmall}); the number of `errors', that is, the subsets $\alpha\in \widehat{\mathbb{M}}$ that do not belong to $\mathcal{M}$. These can be understood as `false positives'. Among these errors, we make the distinction between those which are  respectively proper subsets (fourth column of Tables~\ref{hpc-noise-table-deltaBig} and~\ref{hpc-noise-table-deltaSmall}) or proper supersets (fifth column) of some true $\beta\in\mathcal{M}$, and the other errors (sixth column).

CLEF-Peng obtains  the best overall scores for both values of $\delta$, but as explained above, a special treatment is reserved for the case $\widehat\rho_\alpha \le 0.05$, and this threshold constitutes an arbitrary tuning parameter, which can impact the performance significantly. On the other hand, CLEF-Hill does not require any other adjustment than for the special case $\widehat\rho_\alpha=0$ and performs nearly as well as CLEF-Peng with $\delta=0.0001$ and $k/n=0.005$. In addition, CLEF-Hill outperforms all the other methods. In particular, CLEF-asymptotic is globally less accurate than CLEF-Peng and CLEF-Hill. This  reflects the fact that the  null hypothesis in this algorithm  involves an arbitrary $\kappa_{\min}>0$ fixed by the user. Our own choice $\kappa_{\min} = 0.08$ was fixed by trial and error, which is straightforward with synthetic data and could also be achieved by cross-validation in a real use case.  Finally, as expected, DAMEX obtains very low scores, because it is not designed to handle the addition of noisy features, as explained earlier.
\section{Conclusion}
\label{sec:conclusion}

In this work, we propose three variants of the CLEF algorithm \citep{chiapinofeature}, replacing the heuristic criterion in the original version with a formal test for asymptotic dependence, and this for all possible subsets of features among $\{1,\ldots,d\}$. As in the original CLEF implementation, only a small proportion of all $2^d-1$ subsets has to be examined, while the computational complexity for each such subset is low. Experimental results indicate that the CLEF algorithm is most effective when based on a test constructed from an extension of the Hill estimator \citep{draisma2004bivariate} of the multivariate coefficient of tail dependence.

The procedure we propose is nonparametric and rank-based. Parametric approaches, based for instance on the nested asymmetric logistic distribution \citep{tawn1990modelling}, could have a greater sensitivity, at the cost of increased model risk and greater computational complexity. We have also assumed that the observations are serially independent; in the contrary case, the asymptotic variances of the various estimator need to be estimated by some form of bootstrap, which, in high dimensions, poses important theoretical and computational challenges; see \citep{bucher+d:2013} for the bivariate and serially independent case.

\appendix

\section{Proofs}
\label{sec:appendix}

\begin{proof}[Proof of Proposition~\ref{prop:rnx}]
For $\emptyset \ne \alpha \subset \dd$ and $\bm{x} \in [0, \infty)^\alpha$, put 
\begin{align*} 
  L_\alpha(\bm{x}) 
  &= 
  \{ \bm{y} \in [0, \infty]^d \mid \exists j \in \alpha : y_j < x_j \}, \\
  R_\alpha(\bm{x}) 
  &= 
  \{ \bm{y} \in [0, \infty]^d \mid \forall j \in \alpha : y_j < x_j \}.
\end{align*}
If $\alpha = \dd$, then just write $L$ rather than $L_{\dd}$. Note that $L_\alpha(\bm{x}) = L(\bm{x} \bm{e}_\alpha)$ with $\bm{e}_\alpha = (\un_\alpha(j))_{j=1}^d$ and that $L_{\{j\}}(x_j) = R_{\{j\}}(x_j)$ and thus $W(L_{\{j\}}(x_j)) = W_{\{j\}}(x_j)$. \citet[Theorem~4.6]{einmahl2012m} show that, in the space $\ell^\infty([0, T]^d)$ and under Conditions~\ref{as:bias},~\ref{as:small-k} and~\ref{as:partialDeriv}, we have weak convergence 
\[ 
  \sqrt{k}\{ \widehat{\ell}(\bm{x}) - \ell(\bm{x}) \} 
  \wto 
  W(L(\bm{x})) - \sum_{j=1}^d \partial \ell_j(\bm{x}) W_{\{j\}}(x_j)
\]
as $n \to \infty$. Here, we have taken a version of the Gaussian process $W$ such that the trajectories $\mb{x} \mapsto W(L(\bm{x}))$ are continuous almost surely.

As in \eqref{eq:ell_beta2r_alpha}, we have, for $\emptyset \ne \alpha \subset \dd$ and $\bm{x} \in [0, \infty)^\alpha$, the identity
\[
  \widehat{r}_\alpha(\bm{x}) 
  = \sum_{\emptyset \ne \beta \subset \alpha} (-1)^{|\beta|+1} \widehat\ell( \bm{x}_\beta \bm{e}_\beta )
\]
where $\bm{x}_\beta = (x_j)_{j \in \beta}$. Hence, we can view the vector $(\sqrt{k}( \widehat{r}_\alpha - r_\alpha ))_{\emptyset \ne \alpha \subset \dd}$ as the result of the application to $\sqrt{k}(\widehat{\ell}-\ell)$ of a bounded linear map from the space $\ell^\infty([0, T]^d)$ to the product space $\prod_{\emptyset \ne \alpha \in \dd} \ell^\infty([0, T]^\alpha)$. By the continuous mapping theorem, we obtain, in the latter space, the weak convergence
\[
  \sqrt k\left\{\widehat r_\alpha(\mb x) - r_\alpha(\mb x)\right\}
  \wto
  \sum_{\emptyset \ne \beta \subset \alpha} 
  (-1)^{|\beta|+1} 
  \left\{
  W(L_\beta(\bm{x}_\beta))
  -
  \textstyle\sum_{j=1}^d \partial_j \ell_\beta(\bm{x}_\beta) W_{\{j\}}(x_j \un_\beta(j))
  \right\}.
\]
Here we used $\ell(\bm{x}_\beta \bm{e}_\beta) = \ell_\beta(\bm{x}_\beta)$.

The set-indexed process $W$ satisfies the remarkable property that $W(A \cup B) = W(A) + W(B)$ almost surely whenever $A$ and $B$ are disjoint Borel sets of $[0, \infty]^d \setminus \{ \bm{\infty} \}$ that are bounded away from $\bm{\infty}$: indeed, \eqref{eq:covW} implies $\PE[\{W(A \cup B) - W(A) - W(B)\}^2] = 0$. It follows that the trajectories of $W$ obey the inclusion-exclusion formula, so that, for $\emptyset \ne \alpha \subset \dd$ and $\bm{x} \in [0, \infty)^\alpha$, we have, almost surely,
\begin{align*}
  \sum_{\emptyset \ne \beta \subset \alpha} 
  (-1)^{|\beta|+1} 
  W(L_\beta(\bm{x}_\beta))
  &=
  \sum_{\emptyset \ne \beta \subset \alpha} 
  (-1)^{|\beta|+1} 
  W\left( \textstyle\bigcup_{j\in\beta} R_{\{j\}}(x_j) \right) \\
  &=
  W\left( \textstyle\bigcap_{j\in\alpha} R_{\{j\}}(x_j) \right)
  =
  W(R_\alpha(\bm{x}))
  =
  W_\alpha(\bm{x}).
\end{align*}
We can make this hold true almost surely jointly for all such $\alpha$ and $\bm{x}$: first, consider points $\bm{x}$ with rational coordinates only and then consider a version of $W$ by extending $W_\alpha$ to points $\bm{x}$ with general coordinates via continuity. Similarly, since $W_{\{j\}}(0) = W(\emptyset) = 0$ almost surely, we have
\begin{align*}
  \sum_{\emptyset \ne \beta \subset \alpha} (-1)^{|\beta|+1} 
  \sum_{j=1}^d \partial_j \ell_\beta(\bm{x}_\beta) W_{\{j\}}(x_j \un_\beta(j))
  &=
  \sum_{j \in \alpha} \sum_{\beta : j \in \beta \subset \alpha} \partial_j \ell_\beta(\bm{x}_\beta) W_{\{j\}}(x_j) \\
  &=
  \sum_{j \in \alpha} \partial r_j(\bm{x}) W_{\{j\}}(x_j).
\end{align*}
We have thus shown weak convergence as stated in \eqref{eq:Z_alpha(x)}.
\end{proof}

\begin{proof}[Proof of Corollary~\ref{prop:asymptotic-hatRho}]
  The weak convergence statement \eqref{eq:G_alpha} is a special case of \eqref{eq:Z_alpha(x)}: set $\bm{x} = \bm{1}_\alpha$. The covariance formula \eqref{eq:covG} follows from the fact that
  
\begin{align*}
  \PE[W_\alpha(\bm{1}_\alpha) W_{\alpha'}(\bm{1}_{\alpha'})]
  &=
  \Lambda( \{ \bm{y} \in [0, \infty]^d \mid \forall i \in \alpha \cup \alpha' : y_i < 1 \} ) \\
  &=
  \mu( \{ \bm{u} \in [0, \infty)^d \mid \forall i \in \alpha \cup \alpha' : u_i > 1 \} )
  =
  \rho_{\alpha\cup\alpha'};
\end{align*}
the first equality follows from \eqref{eq:covW} and the last one from \eqref{eq:rho_alpha}. We obtain \eqref{eq:covG} by expanding $G_\alpha = Z_\alpha(\bm{1}_\alpha)$ using \eqref{eq:Z_alpha(x)} and working out $\PE[ G_\alpha G_{\alpha'} ]$ with the above identity.
\end{proof}

\begin{proof}[Proof of Proposition~\ref{theo:asymptot-kappa}] 
Let $\alpha = \{\alpha_1, \ldots, \alpha_S \} \subset \{1, \ldots, d\}$ with $S = \lvert \alpha \rvert \ge 2$ and such that $\mu(\Delta_\alpha) > 0$. In view of~\eqref{eq:rewriteKappa}, we have $\kappa_\alpha = g_\alpha( \theta_\alpha )$ and $\widehat{\kappa}_\alpha = g_\alpha( \widehat{\theta}_\alpha )$ where $\theta_\alpha = ( \rho_\alpha, \rho_{\alpha\setminus\alpha_1}, \ldots, \rho_{\alpha\setminus\alpha_S} )$, $\widehat\theta_\alpha = (\widehat\rho_\alpha, \widehat\rho_{\alpha\setminus\alpha_1}, \ldots, \widehat\rho_{\alpha\setminus\alpha_S})$, and
\begin{equation}
  \label{eq:transfoKappa}
  g_\alpha( x_0, x_1, \ldots, x_{S}) = \frac{x_0}{\sum_{j=1}^{S} x_j - (S-1) x_0}, \qquad x\in[0,\infty)^{1+S}.  
\end{equation}
Let $\nabla g_\alpha(x)$ denote the gradient vector of $g_\alpha$ evaluated $x$ and let $\langle \, \cdot \, , \, \cdot \, \rangle$ denote the scalar product in Euclidean space. Proposition~\ref{prop:asymptotic-hatRho} combined with the delta method as in \cite[Theorem~3.1]{van2000asymptotic} gives, as $n \to \infty$,
\begin{align*}
  \sqrt{k} ( \widehat\kappa_\alpha - \kappa_\alpha )
  =
  \sqrt{k} \{ g_\alpha( \widehat{\theta}_\alpha ) - g_\alpha( \theta_\alpha ) \}
  &=
  \left\langle 
    \nabla g_\alpha(\theta_\alpha), \, 
    \sqrt{k} (\widehat{\theta}_\alpha - \theta_\alpha) 
  \right\rangle
  +
  \oh_{\PP}(1)
  \\
  &\wto
  \left\langle 
    \nabla g_\alpha(\theta_\alpha), \, 
    (G_\alpha,G_{\alpha\setminus\alpha_1},\ldots,G_{\alpha\setminus\alpha_S})
  \right\rangle,
\end{align*}
the weak convergence holding jointly in $\alpha$ by Slutsky's lemma and Proposition~\ref{prop:asymptotic-hatRho}. The partial derivatives of $g_\alpha$ are
\begin{align*}
  \frac{\partial g}{\partial x_0}( x ) 
  &= \frac{\sum_{j=1}^S  x_j }{\{ \sum_{j=1}^S  x_j - (S-1) x_0 \}^2}, \\
  \frac{\partial g}{\partial x_j}( x )
  &= \frac{- x_0}{\{ \sum_{j=1}^S  x_j - (S-1) x_0 \}^2},
    \qquad j = 1, \ldots, S.
\end{align*}
Evaluating these at $x = \theta_\alpha$ and using $\sum_{j \in \alpha} \rho_{\alpha \setminus j} - (S-1) \rho_\alpha = \mu(\Delta_\alpha)$ as in~\eqref{eq:B-Gamma} and~\eqref{eq:rewriteKappa}, we find that
\[
  \left\langle 
    \nabla g_\alpha(\theta_\alpha), \, 
    (G_\alpha,G_{\alpha\setminus\alpha_1},\ldots,G_{\alpha\setminus\alpha_S})
  \right\rangle
  =
  \mu(\Delta_\alpha)^{-2}
  \left\{
    \left( \textstyle\sum_{j\in\alpha}\rho_{\alpha\setminus j} \right) G_\alpha
    - \rho_\alpha \textstyle\sum_{j\in\alpha} G_{\alpha\setminus j}
  \right\},
\]
in accordance to the right-hand side in \eqref{eq:kappalimit}.

To calculate the asymptotic variance $\sigma_{\kappa,\alpha}^2$, we introduce a few abbreviations: we write $R_\beta = R_\beta(\bm{1}_\beta)$ and $W_\beta^\cap = W_\beta(\bm{1}_\beta) = W(R_\beta)$ for $\emptyset \ne \beta \in \dd$ and we put $W_j = W_{\{j\}}(1)$ for $j = 1, \ldots, d$, so that $G_\alpha = W_\alpha^\cap - \sum_{j\in\alpha}\dot{\rho}_{j,\alpha} W_j$. We find
\begin{align*}
  H_\alpha
  &= 
  \Big( \sum_{i\in\alpha}\rho_{\alpha\setminus i} \Big) G_\alpha 
  - 
  \rho_\alpha \sum_{i\in\alpha} G_{\alpha\setminus i} \\
  &=  
  \Big(\sum_{i\in\alpha}\rho_{\alpha\setminus i}\Big)
  \Big( W^\cap_{\alpha} - \sum_{j\in\alpha}\dot{\rho}_{j,\alpha}W_{j}\Big) 
  - 
  \rho_\alpha \sum_{i\in\alpha}
    \Big( W^\cap_{\alpha\setminus i} - \sum_{j\in\alpha\setminus i} \dot{\rho}_{j,\alpha\setminus i} W_{j} \Big).
\end{align*}
From the proof of Proposition~\ref{prop:rnx}, recall that $W(A \cup B) = W(A) + W(B)$ almost surely for disjoint Borel sets $A$ and $B$ of $[0, \infty]^d \setminus \{\bm\infty\}$ bounded away from $\bm{\infty}$; moreover, for such $A$ and $B$, the variables $W(A)$ and $W(B)$ are uncorrelated. Since $R_{\alpha \setminus i}$ is the disjoint union of $R_\alpha$ and $R_{\alpha \setminus i} \setminus R_\alpha$, we have therefore $W_{\alpha\setminus i}^\cap = W_\alpha^\cap + W(R_{\alpha\setminus i} \setminus R_\alpha)$ almost surely. In addition, $\sum_{i \in \alpha} \rho_{\alpha\setminus i} = \mu(\Delta_\alpha) + (S-1)\rho_\alpha$ by \eqref{eq:B-Gamma} applied to $\nu = \mu$. As a consequence,
\[
  H_\alpha
  = 
  \{ \mu(\Delta_\alpha) - \rho_\alpha \}
  W_{\alpha}^\cap
  - \rho_\alpha \sum_{j \in\alpha} W(R_{\alpha\setminus j} \setminus R_\alpha)
  +  \sum_{j\in\alpha} K_{\alpha, j} W_j
\]
where
\[
  K_{\alpha,j}
  =
  \rho_\alpha\Big(\sum_{i\in\alpha\setminus j}\dot{\rho}_{j,\alpha\setminus i}\Big) 
  -
  \Big(\sum_{i\in\alpha}\rho_{\alpha\setminus i}\Big)\dot{\rho}_{j,\alpha},
  \qquad j \in \alpha.
\]
The $S+1$ variables $W_\alpha^\cap = W(R_\alpha)$ and $W(R_{\alpha\setminus j} \setminus R_\alpha)$, $j \in \alpha$, are all uncorrelated, since they involve evaluating $W$ at disjoint sets; $W_j = W(R_{\{j\}})$ is uncorrelated with $W(R_{\alpha \setminus j} \setminus R_\alpha)$, for the same reason. Moreover, $\PE[ W_\alpha^\cap W_j ] = \Lambda(R_\alpha \cap R_{\{j\}}) = \Lambda(R_\alpha) = \rho_\alpha$ and similarly $\PE[ W(R_{\alpha\setminus i} \setminus R_\alpha) W_j ] = \Lambda(R_{\alpha\setminus i} \setminus R_{\alpha}) = \rho_{\alpha \setminus i} - \rho_\alpha$ if $i,j\in\alpha$ and $i \ne j$. Hence

\begin{multline*}
  \Var(H_\alpha)
  = 
  \{\mu(\Delta_\alpha) - \rho_\alpha\}^2 \rho_\alpha  
  +
  \rho_\alpha^2 \sum_{j \in \alpha} (\rho_{\alpha \setminus j} - \rho_\alpha)
  + 
  \sum_{i,j \in \alpha} K_{\alpha,i} K_{\alpha,j} \rho_{\{i,j\}} \\
  + 
  \{\mu(\Delta_\alpha) - \rho_\alpha\} \rho_\alpha\sum_{j\in\alpha}K_{\alpha,j}
  - 
  \rho_\alpha \sum_{j\in\alpha}K_{\alpha,j} \sum_{i\in\alpha\setminus j} (\rho_{\alpha\setminus i} - \rho_\alpha).
\end{multline*}
As $\sum_{j \in \alpha} (\rho_{\alpha \setminus j} - \rho_\alpha) = \mu(\Delta_\alpha)- \rho_\alpha$ and $\sum_{i \in \alpha \setminus j}(\rho_{\alpha \setminus i} - \rho_\alpha) = \mu(\Delta_\alpha) - \rho_{\alpha,j}$, we get

\begin{multline}
\label{eq:varmuZ}
  \Var(H_\alpha)
  = \{\mu(\Delta_\alpha) - \rho_\alpha\}\rho_\alpha
    \Big\{ \mu(\Delta_\alpha) + \sum_{j\in\alpha}K_{\alpha,j} \Big\}
    + \sum_{i,j\in\alpha}K_{\alpha,i}K_{\alpha,j} \rho_{\{i,j\}} \\
    -\rho_{\alpha} \sum_{j \in\alpha}K_{\alpha,j}\{\mu(\Delta_\alpha) - \rho_{\alpha\setminus j} \}.
\end{multline}
Recall $\kappa_\alpha(\bm{x})$ in \eqref{eq:kappax}. We have
\begin{align*}
  \frac{\partial}{\partial x_j} 
  \left(
    \frac{1}{\kappa_\alpha(\bm{x})}
  \right)_{\bm{x} = \bm{1}_\alpha}
  &=
  \frac{\partial}{\partial x_j} 
  \left(
    \frac{\sum_{i \in \alpha} r_{\alpha \setminus i}(\bm{x}_{\alpha \setminus i})}{r_{\alpha}(\bm{x})}
  \right)
  \\
  &=
  \rho_\alpha^{-2}
  \bigg(
    \rho_\alpha \sum_{i \in \alpha \setminus j} \dot{\rho}_{j, \alpha \setminus i}
    -
    \dot{\rho}_{j, \alpha} \sum_{i \in \alpha} \rho_{\alpha \setminus i}
  \bigg)
  =
  \rho_\alpha^{-2} K_{\alpha, j}.
\end{align*}
It follows that $\dot{\kappa}_{j,\alpha} = - \rho_{\alpha}^{-2} K_{\alpha, j} / (1/\kappa_{\alpha})^2 = - K_{\alpha,j} / \mu(\Delta_\alpha)^2$.
By \eqref{eq:varmuZ}, we find that $\sigma_{\kappa,\alpha}^2 = \mu(\Delta_\alpha)^{-4} \Var( H_{\alpha} )$ is equal to the right-hand side of \eqref{eq:sigmakappa2}.
\end{proof}     

\begin{proof}[Proof of Proposition~\ref{prop:estim-sigma-kappa}]
We only need to prove that $\widehat{\sigma}^2_{\kappa,\alpha} = \sigma_{\kappa, \alpha}^2 + \oh_{\PP}(1)$ as $n \to \infty$. In view of the expressions~\eqref{eq:sigmakappa2} and~\eqref{eq:Hatvar-kappa} for $\sigma_{\kappa, \alpha}^2$ and $\widehat{\sigma}_{\kappa,\alpha}$, it is enough to show that $\dot{\kappa}_{j,\alpha, n} = \dot{\kappa}_{\alpha,j} + \oh_{\PP}(1)$, with $\dot{\kappa}_{j,\alpha, n}$ in \eqref{eq:dotkappan}; indeed, Corollary~\ref{prop:asymptotic-hatRho} already gives consistency of $\widehat{\mu}(\Delta_\alpha)$ and $\widehat{\rho}_\beta$. Now since $2^{-1} k^{1/4} \{ \kappa_{\alpha}(\mb
1_\alpha + k^{-1/4}\mb e_j) -  \kappa_{\alpha}(\mb 1_\alpha - k^{-1/4}\mb e_j ) \} \to \dot{\kappa}_{\alpha,j}$ as $n \to \infty$, a sufficient condition is that for some $\epsilon>0$, 
\begin{equation}
  \label{eq:toshowKappa4}
  \sup_{[1-\epsilon,2+\epsilon]^\alpha} k^{1/4}\big|\widehat \kappa_\alpha(\mb x) - \kappa_\alpha(\mb x)\big|
  = \oh_{\PP}(1), \qquad n \to \infty.
\end{equation}
In turn, \eqref{eq:toshowKappa4} follows from weak convergence of $k^{1/2}( \widehat{\kappa}_\alpha - \kappa_\alpha )$ as $n \to \infty$ in the space $\ell^{\infty}([1-\eps, 1+\eps]^\alpha)$. In light of the expressions of $\widehat{\kappa}_\alpha$ and $\kappa_\alpha$ in terms of the (empirical) joint tail dependence functions $\widehat{r}_\beta$ and $r_\beta$, respectively, weak convergence of $k^{1/2}( \widehat{\kappa}_\alpha - \kappa_\alpha )$ follows from Proposition~\ref{prop:rnx} and the functional delta method \citep[Theorem~20.8]{van2000asymptotic}. The calculations are similar to the ones for the Euclidean case in the proof of Proposition~\ref{theo:asymptot-kappa}; an extra point to be noted is that if $\alpha$ is such that $\mu(\Delta_\alpha) > 0$, then the denominator in the definition of $\kappa_\alpha(\bm{x})$ in \eqref{eq:kappax} is positive for all $\bm{x}$ in a neighbourhood of $\bm{1}_\alpha$.
\end{proof}

\begin{proof}[Proof of Proposition~\ref{thm:multi-peng}]
Proposition~\ref{prop:rnx} implies, as $n \to \infty$, the weak convergence

\[
  \bigl( 
    \sqrt k\{ \widehat r_\alpha(\mb{2}_\alpha) - r_\alpha(\mb{2}_\alpha \},
    \sqrt k\{ \widehat r_\alpha(\mb{1}_\alpha) - r_\alpha(\mb{1}_\alpha \}
  \bigr)
  \wto
  \bigl(
    Z_\alpha(\mb 2_\alpha), Z_\alpha(\mb 1_\alpha)
  \bigr).
\]
Now $\widehat \eta_\alpha^P = g(\widehat r_\alpha(\mb 2_\alpha), \widehat r_\alpha (\mb 1_\alpha))$ and $\eta_\alpha = 1 = g(r_\alpha(\mb 2_\alpha),  r_\alpha (\mb 1_\alpha)) = g(2\rho_\alpha,\rho_\alpha)$, with $g(x,y) = \log(2)/ \log (x/y)$; note that the function $r_\alpha$ is homogeneous. Since the gradient of $g$ is $\nabla g(x,y) = \log(2) (\log(x/y))^{-2} ( - x^{-1}, y^{-1} )$, the delta method gives
\begin{align*}
  \sqrt k (\widehat \eta^P - 1) \; \wto \; 
  &
  \left\langle 
    \nabla g(2\rho_\alpha, \rho_\alpha) ,\, \big( Z_\alpha(\mb 2_\alpha),\, Z_\alpha(\mb 1_\alpha) \big) 
  \right\rangle \\
  &=
  \frac{1}{\rho_\alpha \log 2 } 
  \left\langle ( -1/2, 1   ) , \, \big( Z_\alpha(\mb 2_\alpha), Z_\alpha(\mb 1_\alpha)\big) \right\rangle \\
  &= 
  \frac{-1}{2 \rho_\alpha \log 2 } 
  \{ Z_\alpha(\mb 2_\alpha) - 2 Z_\alpha(\mb 1_\alpha) \}.
\end{align*}
The first part of the assertion follows. As for the variance,
\begin{align*}
  \Var(  Z_\alpha(\mb 2_\alpha) - 2Z_\alpha(\mb 1_\alpha) ) 
  &= 
  \Var( Z_\alpha(\mb 2_\alpha) )
  + 
  4\Var( Z_\alpha(\mb 1_\alpha) )
  -
  4 \Cov(Z_\alpha(\mb 2_\alpha), Z_\alpha(\mb 1_\alpha)),  
\end{align*}
The function $r_\alpha$ is homogeneous of order $1$, so that  $\partial_j{r}_{\alpha}$ is constant along rays, that is, the function $0 < t \mapsto \partial_j {r}_{\alpha}(t \,\mb x)$ is constant. Moreover, the measure $\Lambda$ is homogeneous of order $1$ too. In view of \eqref{eq:covW} and \eqref{eq:Z_alpha(x)}, it follows that $\Var( Z_\alpha(t \bm{x}) ) = t \Var( Z_\alpha( \bm{x} ) )$ for $t > 0$; in particular $\Var( Z_\alpha(\bm{2}_\alpha) = 2 \Var( Z_\alpha(\bm{1}_\alpha )$. Further, $\rho_\alpha = (\mathrm{d} r_\alpha(t, \ldots, t) / \mathrm{d}t)_{t=1} = \sum_{j \in \alpha} \dot{\rho}_{j,\alpha}$ and thus 
\begin{align*}
    \Var(Z_\alpha(\mb 1_\alpha))
    & =  \rho_\alpha - 2\sum_{j\in\alpha}\dot{\rho}_{j,\alpha} \rho_\alpha  +  \sum_{j\in\alpha} \sum_{j'\in\alpha}\dot{\rho}_{j,\alpha}\dot{\rho}_{j',\alpha} \rho_{\{j,j'\}} \\
    & = \rho_\alpha - 2 \rho_\alpha^2 +  \sum_{j\in\alpha} \sum_{j'\in\alpha}\dot{\rho}_{j,\alpha}\dot{\rho}_{j',\alpha} \rho_{\{j,j'\}}.
  \end{align*}
The covariance term is
\begin{multline*}
    \Cov(Z_\alpha(\mb 2_\alpha), Z_\alpha(\mb 1_\alpha))
    =\rho_\alpha 
      - \sum_{j\in\alpha} \dot{\rho}_{j,\alpha} \rho_\alpha
      - \sum_{j\in\alpha}\dot{\rho}_{j,\alpha}
      r_\alpha(\mb 2_\alpha \wedge \indinf_{j}) \\
      + \sum_{j\in\alpha} \sum_{j'\in\alpha}
      \dot{\rho}_{j,\alpha}\dot{\rho}_{j',\alpha}r_{\{j,j'\}}(2,1),
\end{multline*}
with $\mb 2_\alpha \wedge \indinf_{j}$ as explained in the statement of the proposition. Since $\sum_{j\in\alpha} \dot{\rho}_{j,\alpha} = \rho_\alpha$, we can simplify and find
\begin{align*}
  \Var(  Z_\alpha(\mb 2_\alpha) - 2Z_\alpha(\mb 1_\alpha) ) 
  &= 
  6 \Var( Z_\alpha(\bm{1}_\alpha) ) - 4 \Cov(Z_\alpha(\mb 2_\alpha), Z_\alpha(\mb 1_\alpha)) \\
  &= 
  2 \rho_\alpha   - 8 \rho_\alpha^2 +  4\sum_{j\in\alpha} \dot{\rho}_{j,\alpha} r_\alpha(\mb 2_\alpha \wedge \indinf_{j})
  \\
  &\qquad \null 
  + 
  \sum_{j\in\alpha} \sum_{j'\in\alpha} 
  \dot{\rho}_{j,\alpha}\dot{\rho}_{j',\alpha} \big[ 6 \rho_{\{j, j'\}} - 4 r_{\{j, j' \}}(2,1)\big].
\end{align*}
Divide the right-hand side by $(2 \rho_\alpha \log 2)^2$ to obtain \eqref{eq:var-hateta}.
\end{proof}

\begin{proof}[Proof of Proposition~\ref{prop:normality-hill}]
   To alleviate
  notations, $\emptyset \ne \alpha \subset\{1,\ldots, d\}$ is fixed
  and the subscript $\alpha$ is omitted throughout the proof.
  Introduce the tail empirical process $Q_n(t) = \widehat T_{(n - \lfloor kt \rfloor)}$ for $0 < t < n/k$.
  The key is to represent the Hill estimator as a statistical tail functional \citep[Example 3.1]{drees1998evindex} of $Q_n$, i.e., $\widehat \eta^H = \Theta(Q_n)$, where $\Theta$ is the map defined for any measurable function
  $ z: (0,1]\to \rset$ as $\Theta(z) = \int_{0}^1  \log^+ \{ z(t) / z(1) \} \ud t$ when the integral is finite and $\Theta(z) = 0$ otherwise. Let $z_\eta: t\in(0,1]\mapsto t^{-\eta}$ denote the quantile function of a standard Pareto distribution with index $1/\eta$; it holds that $\Theta(z_\eta) = \eta$. The map $\Theta$ is scale invariant, i.e., $\Theta(t z) = \Theta(z), t>0$.
  
  The proof consists of three steps:
  \begin{enumerate}
  \item Introduce a function space $D_{\eta,h}$ allowing
    to control $Q_n(t)$ and $z_\eta(t)$ as $t\to 0$. In this space and up to rescaling, 
    $Q_n - z_\eta$ converges weakly
    to a Gaussian process.
  \item Show that the map $\Theta$ is Hadamard
    differentiable at $z_\eta$ tangentially to some well chosen
    subspace of $D_{\eta,h}$.
  \item Apply the functional delta method to show that $\eta^H = \Theta(Q_n) $ is asymptotically normal and compute its asymptotic variance via the Hadamard derivative of $\Theta$.
  \end{enumerate}

  \paragraph{\mdseries \underline{Step 1.}}
  Let $\epsilon>0$ and $h(t) = t^{1/2+ \epsilon},\, t\in[0,1]$. Then
  $h\in\mathcal{H}$, where
  \[\mathcal{H} = \{z: [0,1]\to \rset  \mid z \text{ continuous, }
    \lim_{t\to 0} z(t) t^{-1/2} (\log\log(1/t))^{1/2} = 0
    \}.\] Introduce the function space
  \[
    D_{\eta, h} = \{z: [0,1]\to \rset \mid \lim_{t\to 0} t^{\eta} h(t)
    z(t) = 0\,;\; t\mapsto t^{\eta}h(t) z(t) \in D[0,1] \},
  \]
  where $D[0,1]$ is the space of c\`adl\`ag functions. Notice that
  $z_\eta \in D_{\eta,h}$. Equip $D_{\eta, h}$ with the seminorm $\|z\|_{\eta,h} = \sup_{t\in(0,1]}|t^\eta h(t) z(t)|$. Let $m = \lceil n q^\leftarrow (k/n)\rceil $, with $\lceil \, \cdot \, \rceil$ the ceil function, so that $k/m\to \rho$;
  for self-consistency of the present paper, the roles of
  $k$ and $m$ are reversed compared to the notation in \cite{draisma2004bivariate}. From \cite[Lemma~6.2]{draisma2004bivariate}, we have, for all $t_0>0$, 
  in the space $D_{\eta,h}$, the weak convergence
  \begin{equation}
    \label{eq:cv-tailQuantile}
    \sqrt{k} \left( \frac{m}{n}Q_n - z_\eta \right) \wto \left( \eta t^{-(\eta +1)} \bar W(t)\right)_{t \in [0,t_0]}
  \end{equation}
  where $\bar W (t) = \tilde W(\mb t_\alpha)$, and $\tilde W$ is defined as in the
  statement of Proposition~\ref{prop:normality-hill}. Indeed, the
  process $\bar W$ in the statement from \cite[Lemmata~6.1 and~6.2]{draisma2004bivariate}
  has same distribution as $W_1(\mb t_\alpha)$ in
  the case $\rho=0$; recall that our $\rho$ is denoted by $l$ in \cite{draisma2004bivariate}. Put $U_{i,j} = 1 - F_j(X_{i,j})$, and let $U_{(1),j} \le \ldots \le U_{(d),j}$ be the order statistics of $U_{1,j}, \ldots, U_{n,j}$. In the case $\rho>0$,
  $\bar W$ equals in distribution $W_{\mathrm{dra}} (\mb t_\alpha)$ where
  $W_{\mathrm{dra}}$ appears in Lemma~6.1 in the cited reference as the
  limit in distribution (for $\alpha= \{1,2\}$), for
  $\mb x\in E_\alpha$, of
  \begin{align*}
    \Delta_{n,k,m} (\mb x) 
    &=  \sqrt{k} \Bigg[  \frac{1}{k}  \sum_{i=1}^n \un\{ \forall j \in \alpha : U_{i,j} \le U_{(\lfloor m x_j \rfloor), j} \} - c(\mb x)\Bigg]  \\
    &= 
    \underbrace{ \sqrt{ \frac{m}{k}} }_{\to \rho^{-1/2}}
    \sqrt{m}
    \Biggl[ 
      \underbrace{\frac{1}{m} \sum_{i=1}^n \un\{ \forall j \in \alpha : U_{i,j} \le U_{(\lfloor m x_j \rfloor),j} \} }_{\text{$r_n(\mb x)$ with $k$ replaced by $m$}} 
      \null -
      r(\mb x)\underbrace{\frac{k}{m\rho}}_{\to 1}
    \Biggr].
  \end{align*} 
  From Proposition~\ref{prop:rnx} and Slutsky's Lemma, we have $\Delta_{n,k,m} \wto \rho^{-1/2} Z_\alpha$ in $\ell^\infty([0, 1]^\alpha)$. Therefore, $W_{\mathrm{dra}} = \rho^{-1/2} Z_\alpha$, as claimed.

  \paragraph{\mdseries \underline{Step 2.}}
  The right-hand side of
  \eqref{eq:cv-tailQuantile} 
  belongs to
  $\mathcal{C}_{h,\eta} = \{ z \in D_{\eta,h} \mid \text{$z$ is continuous}\}$.  To apply the functional delta-method   \cite[Theorem~20.8]{van2000asymptotic}, we must verify that
  the restriction of $\Theta$ to $\bar D_{\eta, h}$ is
  Hadamard-differentiable tangentially to $\mathcal{C}_{\eta,h}$, with
  derivative $\Theta'$, where $\bar D_{\eta, h}$ is a subspace of
  $D_{\eta,h}$ such that $\PP(Q_n \in \bar D_{\eta,h})\to 1$ as
  $n\to \infty$; see the remark following Condition~3 in \cite{drees1998evindex}. Then it will follow
  from the scale invariance of $\Theta$, the identities $\Theta(Q_n) = \widehat \eta^H$ and $\Theta(z_\eta) = \eta$, and the weak convergence in~\eqref{eq:cv-tailQuantile} that
  \begin{equation}
  \label{eq:etaH_W}
    \sqrt{k} \left( \widehat\eta^H- \eta \right)
    =
    \sqrt{k} \left( \Theta(\frac{m}{n}Q_n) - \Theta(z_\eta) \right)
    \wto 
    \Theta'\left[\left( \eta t^{-(\eta +1)} \bar W(t)\right)_{t \in [0,1]}\right]
  \end{equation}
  as $n \to \infty$. From \cite[Example~3.1]{drees1998evindex}, the
  restriction of $\Theta$ to $\bar D_{\eta,h}$, the subset of
  functions on $D_{\eta,h}$ which are positive and non increasing, is
  indeed Hadamard differentiable; letting $\nu$ denote the measure $\ud \nu (t) = t^{\eta}\ud t + \ud \epsilon_1(t)$, with $\epsilon_1$ a point mass at $1$, the derivative is
  \[
    \Theta'(z) = \int_0^1 t^\eta z(t) \ud t - y(1) = \int_{[0,1]} z(t)
    \ud \nu (t).
  \]
  
  \paragraph{\mdseries \underline{Step 3.}}
  The weak limit in \eqref{eq:etaH_W} is thus equal to $\int_{[0,1]} \eta t^{-(\eta +1)} \bar W(t) \ud \nu(t)$.  From~\cite[Proposition 2.2.1]{shorack2009empirical}, the latter random variable is centered Gaussian with variance
  \[
    \sigma^2 = \iint_{[0,1]^2} \eta^2 (st)^{-(\eta +1)} \Cov(\bar
    W(s), \bar W(t))\ud \nu(s)\ud\nu(t).
  \]
  By definition of $\nu$ and by symmetry of the covariance,

  \begin{multline*}
    \sigma^2 / \eta^2  = 2  \underbrace{\int_{s=0}^1\int_{t=0}^s (st)^{-1} \Cov(\bar W(s), \bar W(t)) \ud t \ud s}_{A} \\
                         - 2 \underbrace{\int_{s=0}^1 \Cov(\bar W(s), \bar W(1)) s^{-1}\ud s}_B + \Var(\bar W(1)).  
  \end{multline*}
  For any $s\in(0,1)$,
  \begin{align*}
    \int_{t=0}^s \Cov(\bar W(s), \bar W(t))(st)^{-1} \ud t
    & =\int_{u=0}^1 \Cov(\bar W(s), \bar W(us))(su)^{-1} \ud u \\
    &= \int_{u=0}^1 \Cov(\bar W(1), \bar W(u))(u)^{-1} \ud u  
    = B. 
  \end{align*}
  The penultimate equality follows from $\Cov(\bar W(\lambda s),\bar W( \lambda t)) = \lambda \Cov(\bar W(s),\bar W( t))$
  for   $\lambda>0$ and $s,t\in(0,1]$. Therefore $A=B$ and
  $\sigma^2 = \eta^2 \Var(\bar W(1))$, as required.
\end{proof}


\section{CLEF algorithm and variants}
\label{sec:appendix-CLEF}

The CLEF algorithm is described at length in \cite{chiapinofeature}. For completeness, its pseudo-code is provided below. The underlying idea is to iteratively construct pairs, triplets, quadruplets\dots\ of features that are declared `dependent' whenever $\widehat\kappa_\alpha \ge C$ for some user-defined tolerance level $C > 0$. Varying this criterion produces three variants of the original algorithm, namely CLEF-Asymptotic, CLEF-Peng, and CLEF-Hill. The pruning stage of the algorithm is the same for all three variants.
\begin{algorithm}
\caption{CLEF  (CLustering Extreme Features) }
\begin{algorithmic}
  \STATE \textbf{Input}: Tolerance parameter $\kappa_{\min}>0$. \\[.7ex]
  \STATE {\bf STAGE 1: constructing the collection $\widehat{\mathbb{M}}$ of tail-dependent groups.}
  \STATE \textbf{Step 1:} Put $\hat{\mathcal{A}}_1 = \{ \{1\}, \ldots, \{d\} \}$ and $S = 1$.
  \STATE \textbf{Step $\bm{s = 2, \ldots, d}$}: If $\hat{\mathcal{A}}_{s-1}=\emptyset$, end \textbf{STAGE 1}. Otherwise:
\begin{itemize}
\item  Generate  candidates of size $s$: \\
 $\mathcal{A}'_{s} = \{\alpha\subset\{1,\ldots,d\} : |\alpha|=s \text{ and }
\alpha\setminus j \in\hat{\mathcal{A}}_{s-1} \text{ for all } j\in\alpha\}$.
\item Put  $ \hat{\mathcal{A}}_{s}
= \big\{\alpha\in\mathcal{A}'_{s} : \hat{\kappa}_{\alpha}>\kappa_{\min} \big\}$.
\item If $\hat{\mathcal{A}}_{s} \ne \emptyset$, put $S = s$.
\end{itemize}
\emph{\bf Output}: $\widehat{\mathbb{M}} = \emptyset$ if $S = 1$ and $\widehat{\mathbb{M}} =
\bigcup_{s=2}^{S} \hat{\mathcal{A}}_s$ if $S \ge 2$. \\[.7ex]
\STATE {\bf STAGE 2: pruning, keeping maximal groups $\bm\alpha$ only.}\\
\STATE If $S = 1$, then $\widehat{\mathbb{M}}_{\max} = \emptyset$. Otherwise:
\STATE \textit{Initialization:} $\mathbb{\widehat M}_{\max} \leftarrow \hat{\mathcal{A}}_S$. 
\STATE  for $s= (S-1) : 2$, 
\STATE \qquad for $\alpha\in\hat{\mathcal{A}}_s$, 
\STATE \qquad \qquad If  there is no
       $\beta\in\widehat{\mathbb{M}}_{\max}$ such that
       $\alpha\subset\beta$, then $\widehat{\mathbb{M}}_{\max} \leftarrow
\widehat{\mathbb{M}}_{\max} \cup\{\alpha \}$.
\STATE {\bf Output}: $\widehat{\mathbb{M}}_{\max} $
\end{algorithmic}
\end{algorithm}

\FloatBarrier

\begin{acknowledgements}
  This work was supported by a public grant as part of the
  Investissement d’avenir project, reference ANR-11-LABX-0056-LMH,
  LabEx LMH.
\end{acknowledgements}


\end{document}